\newtheorem{example}{Example}
\newtheorem{theorem}{Theorem}
\newtheorem{definition}[theorem]{Definition}
\newtheorem{proposition}[theorem]{Proposition}
\newcommand{\ind}[1]{\mathds{1}_{\left\lbrace #1 \right\rbrace}}
\newcommand{\bs}{\boldsymbol}
\newcommand{\ds}{\displaystyle}
\newcommand{\pr}[2]{\mathrm{Pr} \left(#1 \left| #2 \right. \right)}
\newcommand{\GameSF}{\wideparen{\mathcal{G}} = \left( \mathcal{K}, \left\lbrace \mathcal{A}_k \right\rbrace_{k \in \mathcal{K}},\left\lbrace f_{k}\right\rbrace_{ k \in\mathcal{K}} \right) }
\newcommand{\gameSF}{\wideparen{\mathcal{G}}}
\newcommand{\GameSFmixed}{\wideparen{\mathcal{G}}' = \left( \mathcal{K}, \left\lbrace \triangle\left(\mathcal{A}_k\right)\right\rbrace_{k \in \mathcal{K}}, \left\lbrace \bar{f}_k \right\rbrace_{k \in \mathcal{K}} \right)}
\newcommand{\GameSE}{\hat{\mathcal{G}} = \left( \mathcal{K}, \left\lbrace \mathcal{A}_k \right\rbrace_{k \in \mathcal{K}},\left\lbrace u_{k}\right\rbrace_{ k \in\mathcal{K}}, \left\lbrace f_{k}\right\rbrace_{ k \in \mathcal{K}}\right)}
\newcommand{\gameSE}{\hat{\mathcal{G}} }
\newcommand{\GameSEprime}{\hat{\mathcal{G}} = \left( \mathcal{K}, \left\lbrace \mathcal{A}_k \right\rbrace_{k \in \mathcal{K}},\left\lbrace u_{k}\right\rbrace_{ k \in\mathcal{K}}, \left\lbrace f_{k}' \right\rbrace_{ k \in \mathcal{K}}\right)}
\newcommand{\gameSEprime}{\hat{\mathcal{G}} }
\newcommand{\GameNF}{\mathcal{G} = \left( \mathcal{K}, \left\lbrace\mathcal{A}_k \right\rbrace_{k \in \mathcal{K}},\left\lbrace u_{k}\right\rbrace_{ k \in \mathcal{K}}\right)}
\newcommand{\gameNF}{\mathcal{G}}
\newcommand{\gameNFmodif}{\mathcal{G}}
\newcommand{\GameESF}{\tilde{\mathcal{G}} = \left(\mathcal{K}, \left\lbrace \mathcal{A}_k \right\rbrace_{k \in \mathcal{K}},\left\lbrace c_{k}\right\rbrace_{ k \in\mathcal{K}}, \left\lbrace f_{k}\right\rbrace_{ k \in \mathcal{K}}\right)}
\newcommand{\GameESFapp}{\mathcal{G} = \left(\mathcal{K}, \left\lbrace \mathcal{A}_k \right\rbrace_{k \in \mathcal{K}},\left\lbrace u_{k}\right\rbrace_{ k \in\mathcal{K}}, \left\lbrace g_{k}\right\rbrace_{ k \in \mathcal{K}}\right)}
\newcommand{\gameESFapp}{\mathcal{G}}
\newcommand{\gameESF}{\tilde{\mathcal{G}}}
\newcommand{\GameESFpotential}{\tilde{\mathcal{G}'} = \left(\mathcal{K}, \left\lbrace \mathcal{A}_k \right\rbrace_{k \in \mathcal{K}},\left\lbrace \phi \right\rbrace_{ k \in\mathcal{K}}, \left\lbrace f_{k}\right\rbrace_{ k \in \mathcal{K}}\right)}
\begin{document}
\title{Quality-Of-Service Provisioning in Decentralized Networks: A Satisfaction Equilibrium Approach}
\author{S.M.~Perlaza,    
        H.~Tembine,      
        S.~Lasaulce,     
        ~and~M.~Debbah   
\thanks{S.M. Perlaza is with the Alcatel-Lucent Chair in Flexible Radio at SUPELEC. $3$ rue Joliot-Curie, $91192$, Gif-sur-Yvette, cedex. France. (samir.medina-perlaza@supelec.fr)}
\thanks{H.~Tembine is with  the Telecommunications Department at SUPELEC. $3$ rue Joliot-Curie,  $91192$, Gif-sur-Yvette, cedex. France. (tembine@ieee.org)}
\thanks{S. Lasaulce is with the Laboratoire des Signaux et Systemes - LSS (CNRS-SUPELEC-Paris Sud). $3$ rue Joliot-Curie, $91192$, Gif-sur-Yvette, cedex. France. (lasaulce@lss.supelec.fr) }
\thanks{M. Debbah is with the Alcatel-Lucent Chair in Flexible Radio at SUPELEC. $3$ rue Joliot-Curie, $91192$, Gif-sur-Yvette, cedex. France. (merouane.debbah@supelec.fr)}%
}
\maketitle
\begin{abstract}
\boldmath
This paper introduces a particular game formulation and its corresponding notion of equilibrium, namely the satisfaction form (SF) and the satisfaction equilibrium (SE). A game in SF models the case where players are uniquely interested in the satisfaction of some individual performance constraints, instead of individual performance optimization. Under this formulation, the notion of equilibrium corresponds to the situation where all players can simultaneously satisfy their individual constraints. The notion of SE,
models the problem of QoS provisioning in decentralized self-configuring networks. Here, radio devices are satisfied if they are able to provide the requested QoS. Within this framework, the concept of SE is formalized for both pure and mixed strategies considering finite sets of players and actions. In both cases, sufficient conditions for the existence and uniqueness of the SE are presented.
When multiple SE exist, we introduce the idea of effort or cost of satisfaction and we propose a refinement of the SE, namely the efficient SE (ESE). At the ESE, all players adopt the action which requires the lowest effort for satisfaction. A learning method that allows radio devices to achieve a SE in pure strategies in finite time and requiring only one-bit feedback is also presented. 
Finally, a power control game in the interference channel is used to highlight the advantages of modeling QoS problems following the notion of SE rather than other equilibrium concepts, e.g., generalized Nash equilibrium.
\end{abstract}

\section{Introduction}\label{SecIntroduction}
Nowadays, in the context of decentralized networks, a well accepted idea among radio device manufacturers, network designers and service providers is to consider radio devices as smart entities able to autonomously adopt the transmit/receive configuration that allows the achievement of  certain level of individual quality-of-service (QoS).
In particular, a transmit/receive configuration can be described in terms of 
power allocation polices, coding-modulation schemes, scheduling polices, decoding order, \emph{etc}.
Within this framework, the resulting decentralized radio resource sharing problem can be analyzed using tools from both non-cooperative game theory (GT) \cite{Mackenzie-01, Lasaulce-Tembine-Book-2011} and variational inequality (VI) theory \cite{Scutari-Palomar-2009,Scutari-Palomar-2010}. Indeed, strong connections between both approaches have been already highlighted \cite{Facchinei-2007b}, \cite{Facchinei-Pang-2009}.  
From this point of view, the resulting competitive interaction through mutual interference between all radio devices can be modeled by a game in normal form with coupled action sets
(See \cite{Lasaulce-Tembine-Book-2011} and references therein). Here, radio devices are  players aiming to optimize an individual benefit (performance metric) and guarantee certain individual constraints (quality-of-service, QoS)  by selfishly choosing their actions, i.e., the transmit/receive configurations. In this context, the above mentioned coupling between the actions sets stems from the fact that the set of transmit/receive configurations that satisfy the QoS of a given radio device depend on the configurations adopted by all the other devices.
Under this game formulation, initially proposed in \cite{Debreu-1952}, \cite{Rosen-65}, the notion of Nash equilibrium (NE) \cite{Nash-1950} is transformed into the notion described in \cite{Debreu-1952}, which is known nowadays as generalized Nash equilibrium (GNE). The notion of GNE is already well integrated in the context of QoS provisioning in decentralized networks. For instance, it is used in \cite{PangScutari-IT-08} and \cite{Altman-Altman-2003} to model a network where radio devices aim to minimize the transmit power consumption while guaranteeing some individual minimum signal to interference plus noise ratios (SINR). Therein, the existence and uniqueness of the GNE is discussed. In particular, both articles provide sufficient conditions for the convergence of the best response dynamics \cite{Fudenberg-Tirole-1991} to the GNE, in the context of finite action sets  (only in \cite{Altman-Altman-2003}) and convex and closed action sets (in \cite{PangScutari-IT-08} and \cite{Altman-Altman-2003}). Other examples are provided in \cite{Yin-Mehta-2011} and references therein.

\noindent
Interestingly, in \cite{Facchinei-2007b}\cite{Facchinei-Pang-2009}, it has been shown that a GNE can be formulated as VI problem as long as the set of actions is closed and convex and the utility functions are continuously differentiable. From such a formulation, many conclusions regarding the existence and uniqueness of the equilibrium can be obtained based on existing results in VI theory \cite{Scutari-Palomar-2009}. Additionally, algorithms to achieve the GNE can be developed based on existing methods for solving the  VI problem, for instance, see \cite{Iiduka-2010}\cite{Iiduka-2011}. 
Nonetheless, the use of VI in the context of decentralized networks is limited since it requires the explicit expression of the metric projection onto the set of actions \cite{Yamada-2001,LuX-2009, MarinoG-2011}. This condition is particularly difficult to be satisfied in decentralized networks since radio devices possess only local information \cite{Rose-Perlaza-CommMag2011}. An extensive analysis of VI in the  context of decentralized wireless communications is presented in \cite{Scutari-Palomar-2009}.
 
\noindent
In this paper, we present a game formulation and its corresponding
notion of equilibrium, namely the satisfaction form (SF) and the satisfaction equilibrium (SE). The satisfaction form models the case where players are concerned by the satisfaction of their individual constraints but not the optimization of its performance metric.
A game is said to be in SE, when all players simultaneously satisfy their individual constraints.
In the sequel, we exclusively focus on the case of finite sets, since we aim to model the fact that radio devices choose from a finite number of transmit configurations. 

\noindent
The idea behind SE was originally introduced in \cite{Ross-06, Ross-07} for a particular class of conditions in pure strategies for the case of finite set of actions. Later, the concept was formulated in terms of a fixed point inclusion for the case of pure strategies in \cite{Perlaza-Globecom-2010}, in the context of both finite action sets and convex and closed action sets.  
The advantages of the notion of SE over the classical notions such as GNE, at least in the domain of signal processing for wireless communications are multifold. Here, we highlight the fact that, $(i)$ the existence of the SE in pure strategies is less restrictive than the notion of GNE. That is, a network can possess a SE but not a GNE. $(ii)$ The behavioral rules to learn a SE are simpler than behavioral rules to learn GNE \cite{Lasaulce-Tembine-Book-2011, Rose-Perlaza-CommMag2011}. 
Within this framework, the contributions presented in this paper are the followings:
\begin{itemize}
\item The notions of SF and
 SE are formalized in both pure strategies (PS) and mixed strategies (MS) in the context of finite games. Conditions for the existence of the SE in PS and MS are established. 
\item We introduce the notion of epsilon-satisfaction equilibrium ($\epsilon$-SE), which consists of a mixed strategy which allows all players to be satisfied with probability not less than a certain threshold. 
This equilibrium concept turns out to be less restrictive in terms of existence than the SE. 
 \item A refinement of the notion of SE to which we refer to as efficient SE (ESE) is presented as a mechanism for equilibrium selection involving the idea of \emph{effort for satisfaction}. 
\item A simple learning algorithm
to achieve SE, based on the algorithms proposed in  \cite{Ross-07}, is presented.  
\end{itemize}

\noindent
The sequel of the paper is organized as follows. In Sec. \ref{SecProblemFormulation}, the QoS  provisioning problem in decentralized self-configuring networks is formulated. In Sec. \ref{SecSE}, the notions of SF and SE are presented.
In Sec. \ref{SecExistenceAndUniqueness}, the existence and uniqueness of the SE are analyzed.  In Sec. \ref{SecSEandOtherEquilibria}, we compare the notions of SF and SE with existing equilibrium notions. 
In Sec. \ref{SecLearningSE}, behavioral rules that allow radio devices to learn a SE are described. In Sec. \ref{SecApplications}, the notion of SE is used in the context of a simple power control in the interference channel where transmitters must guarantee a minimum transmission rate.   
The paper is concluded by Sec. \ref{SecConclusions}.

\section{Problem Formulation}\label{SecProblemFormulation}

We consider a fully decentralized network where transmitters communicate with their respective receivers by sharing a common set of radio resources, and thus, subject to a competitive interaction. For instance, the usage of the same frequency bands brings a mutual interference condition. Within this framework, each radio device, either a transmitter or receiver, aims to autonomously adjust its transmit/receive configuration in order to guarantee a communication with certain QoS level. The QoS can be described by several parameters depending on the type of services. Some classical examples are constraints over the transmission rates and maximum delays. The key point is that the feasibility of the QoS of each radio device depends on the configurations adopted by all the other devices in the network.
Regarding the transmit/receive configurations, we assume that there exists a finite number of feasible choices for each radio device. In particular, a configuration can be described in terms of  channel selection and power allocation policy, modulation and error correction schemes, constellation sizes, etc. Similarly,  receivers might tune their scheduling, decoding order, etc.
 
\noindent
In this work, we consider that radio devices are selfish entities aiming to satisfy their QoS individual constraints. Each device is in particular careless of whether other devices achieve or not their required QoS. Moreover,
message exchanging between radio devices for establishing a sort of coordination to jointly improve the individual or global performance is not considered here. 
This is basically because of the amount of signaling it might require and also because devices are not necessarily able to communicate 
due to the use of different physical layer technologies.

\noindent
In the sequel of this article, we intend to provide a mathematical framework for tackling this problem using tools from game theory. Particular attention is given to real-system implementation constraints, for instance finite number of choices and limited numbers of bits dedicated to feedback. 

\section{Games in Satisfaction Form and Satisfaction Equilibrium}\label{SecSE}

In this section, we introduce a novel game formulation where in contrast to existing formulations (e.g., normal form \cite{Nash-1950} and normal form with constrained action sets \cite{Debreu-1952}), the idea of performance optimization, i.e., utility maximization or cost minimization, does not exist. In our formulation, to which we refer to as satisfaction-form, the aim of the players is to adopt any of the actions which allows them to satisfy a specific condition given the actions adopted by all the other players.  Under this game formulation, we introduce the concept of satisfaction equilibrium.

\subsection{Games in Satisfaction Form}
In general, a game in satisfaction-form can be described by the following triplet
\begin{equation}
\GameSF.
\end{equation}
Here, the set $\mathcal{K} = \lbrace 1, \ldots, K \rbrace$ represents the set of players and the set $\mathcal{A}_{k} = \lbrace A_{k}^{(1)}, \ldots, A_{k}^{(N_k)}\rbrace$ represents the set of $N_k$ actions available for transmitter $k$.
An action profile is a vector $\bs{a} = \left(a_{1}, \ldots, a_{K} \right) \in \mathcal{A}$, where,
\begin{equation}
\mathcal{A} = \mathcal{A}_1 \times \ldots \times \mathcal{A}_K.
\end{equation}
In this analysis, the sets $\mathcal{K}$ and $\left\lbrace \mathcal{A}_k \right\rbrace_{k \in \mathcal{K}}$ are assumed finite and non-empty.
We denote by $\bs{a}_{-k} = \left(a_1, \ldots, a_{k-1},a_{k+1},\ldots, a_K\right)\in \mathcal{A}_{-k}$, where
\begin{equation}
\mathcal{A}_{-k} = \mathcal{A}_1 \times \ldots \times \mathcal{A}_{k-1} \times \mathcal{A}_{k+1} \times \ldots, \times\mathcal{A}_K,
\end{equation}
the vector obtained by dropping off the $k$-th component of the vector $\bs{a}$. With a slight abuse of notation, we write the vector $\bs{a}$ as $\left(a_k,\bs{a}_{-k}\right)$, in order to emphasize its $k$-th component.
The correspondence $f_k: \mathcal{A}_{-k} \rightarrow 2^{\mathcal{A}_k}$ determines the set of actions of player $k$ which allows its satisfaction given the actions played by all the other players. Here, the notation $2^{\mathcal{A}_k}$ refers to the set of all possible subsets of the set $\mathcal{A}_k$, including $\mathcal{A}_k$. 
Note that $2^{\mathcal{A}_k}$  includes also the empty set, which models the case when one player ends up without  an action that allows the satisfaction of its individual constraints given the other players' actions. Often, this is a strong mathematical constraint and thus, in some sections of this paper, we assume that none of the correspondences $f_k$ maps the empty set.

\noindent
In general, an important outcome of a game in satisfaction form is the one where all players are satisfied. We refer to this game outcome as satisfaction equilibrium (SE).
\begin{definition}[Satisfaction Equilibrium in PS \cite{Perlaza-Globecom-2010}] \label{DefSE} \emph{An action profile $\bs{a}^+$ is an equilibrium for the game $\GameSF$ if
\begin{equation}\label{EqSE}
\forall k \in \mathcal{K}, \quad a_k^+ \in f_{k}\left(\bs{a}_{-k}^+\right).
\end{equation}
}
\end{definition}

\noindent
 Note that under this formulation, the outcome where all players are satisfied is naturally an equilibrium. Here, since the aim of each player is to be satisfied, none of them has a particular interest on changing its current action once they are at the SE. 
An important remark here is that, players are assumed to be careless of whether other players can satisfy or not their individual constraints. An interesting analysis of the impact of this assumption in the definition of equilibrium can be found in \cite{Altman-Solan-2009}.

\noindent
In this context, when DCSN are modeled using the satisfaction form,
 radio devices are indifferent to the fact that there might exist another transmit configuration with which a higher performance e.g., transmission rate, can be obtained. Here, as long as each radio device 
 is able to satisfy its individual QoS conditions, it has no incentive to unilaterally change its  current  transmit/receive configuration.

\noindent
In the following example, we show how a given decentralized self-configuring network can be modeled by a game in satisfaction form.
\begin{example}\label{ExSE}\emph{
Consider a decentralized and self-configuring network made of a set $\mathcal{K} = \lbrace 1, \ldots, K \rbrace$ of transmitter-receiver pairs.  For all $k \in \mathcal{K}$, let $\mathcal{A}_k$ be the set of transmit configurations available for transmitter $k$ and let the function $u_k: \mathcal{A}_1 \times \ldots \times \mathcal{A}_{K} \rightarrow \mathds{R}$ denote its (Shannon) transmission rate. Transmitter $k$ must guarantee  a data rate higher than $\Gamma_k$ bps. Hence, the set of configurations it must adopt, given the configurations $\bs{a}_{-k}$ of all the other transmitters, is determined by the correspondence $f_k: \mathcal{A}_{-k} \rightarrow 2^{\mathcal{A}_k}$, which we define as follows,
\begin{equation}\label{EqfkExample1}
f_k\left( \bs{a}_{-k} \right) =  \left\lbrace a_k \in \mathcal{A}_k: u_k\left( a_k, \bs{a}_{-k} \right) \geqslant \Gamma_k \right\rbrace.
\end{equation}
Thus, the behavior of this network can be modeled by the game $\GameSF$ in satisfaction form.
}
\end{example}

\noindent
In Sec. \ref{SecSEandOtherEquilibria}, we use this example to highlight the differences between the satisfaction form and other game formulations.
In the following, we describe the extension in mixed strategies of the game in satisfaction form.

\subsection{Extension in Mixed Strategies of the Satisfaction Form}

The concept of mixed strategies was introduced by Borel in \cite{Borel-1921}.  A mixed strategy of player $k$ is a probability distribution over the set of actions $\mathcal{A}_k$.
We denote the set of all possible probability distributions over the set $\mathcal{A}_k$ by $\triangle\left( \mathcal{A}_k \right)$, i.e., the unit simplex over the elements of $\mathcal{A}_k$.
We denote by $\bs{\pi}_k = \left(\pi_{k,A_k^{(1)}}, \ldots, \pi_{k,A_k^{(N_k)}} \right)$ the probability distribution (mixed strategy) chosen by player $k$. Here, for all $n_k \in \lbrace 1, \ldots, N_k\rbrace$, $\pi_{k, A_{k}^{(n_k)}}$ represents the probability that player $k$ plays action $A_{k}^{(n_k)} \in \mathcal{A}_k$.

\noindent
Following this notation, we denote by $\GameSFmixed$ the extension in mixed strategies of the game $\GameSF$, where the correspondence
\begin{equation}
\bar{f}_k: \ds\prod_{j\in\mathcal{K}\setminus\lbrace k \rbrace} \triangle\left(\mathcal{A}_j\right) \rightarrow 2^{\triangle\left(\mathcal{A}_{k}\right)},
\end{equation}
determines the set of all possible probability distributions that allow player $k$ to always choose an action which satisfies its individual conditions, that is,
\begin{eqnarray}
\label{EqFmixed}
\bar{f}_k\left( \bs{\pi}_{-k} \right) & = & \left\lbrace \bs{\pi}_k \in \triangle\left( \mathcal{A}_k \right): \right.\\
\nonumber
& & \left. \pr{a_k \in f_k\left( \bs{a}_{-k} \right)}{\left(\bs{\pi}_k, \bs{\pi}_{-k}\right)} = 1\right\rbrace.
\end{eqnarray}

In this context, we define the SE as follows.
\begin{definition}[Satisfaction Equilibrium in MS]\label{DefSEmixed}\emph{
The mixed strategy profile $\bs{\pi}^* \in \triangle\left(\mathcal{A}_1\right)\times\ldots\times\triangle\left(\mathcal{A}_K\right)$ is a SE of the game $\GameSFmixed$, if for all $k \in \mathcal{K}$,
\begin{equation}
\bs{\pi}_k^* \in \bar{f}_k\left( \bs{\pi}_{-k}^*\right).
\end{equation}
}
\end{definition}
From Def. \ref{DefSEmixed} and \eqref{EqFmixed}, it can be implied that if $\bs{\pi}^* \in \triangle\left(\mathcal{A}_1\right)\times\ldots\times\triangle\left(\mathcal{A}_K\right)$ is a SE, then the following holds, for all $k \in \mathcal{K}$,
\begin{equation}
\pr{a_k \in f_k\left(\bs{a}_{-k}\right)}{\left(\bs{\pi}_k^*,\bs{\pi}_{-k}^*\right)} = 1.
\end{equation}
Note that it can be stated that the set of equilibria of the game $\GameSF$ is a subset of the set of equilibria of the mixed extension $\GameSFmixed$, if we establish an injection from the action set $\mathcal{A}_k$ to the canonical basis of the space of $N_k$-dimensional vectors $\mathds{R}^{N_k}$. For instance, let the $n_k$-th action of player $k$, i.e., $A_{k}^{(n_k)}$, be associated with the unitary vector $\bs{e}_{n_k}^{(N_k)}  = \left(e_{n_k,1}^{(N_k)}, \ldots, e_{n_k,N_k}^{(N_k)} \right) \in \mathds{R}^{(N_k)}$, where, all the components of the vector $\bs{e}_{n_k}^{(N_k)}$ are zero except its $n_k$-th component, which is unitary. The vector $\bs{e}_{n_k}^{(N_k)}$ represents a degenerated probability distribution, where the action $A_k^{(n_k)}$ is deterministically chosen. Using this argument, it becomes clear that every SE of the game $\GameSF$ is also a SE in the game $\GameSFmixed$.

\noindent
As we shall see in the next section,  games in satisfaction form might not necessarily have a SE neither in pure strategies nor mixed strategies. Thus, in the following we present a less restrictive notion of equilibrium to which we refer to as epsilon-satisfaction equilibrium ($\epsilon$-SE).

\subsection{Epsilon-Satisfaction Equilibrium}

At the SE of the game $\GameSFmixed$, players choose their actions following a probability distribution such that only action profiles that allow all players to simultaneously satisfy their individual conditions are played with positive probability.
This interpretation leads immediately to the conclusion that if it does not exist at least one action profile that allows all players to be simultaneously satisfied, then, it does not exists any SE in the game  $\GameSFmixed$. However, under certain conditions, it is always possible to build mixed strategies that allow players to be satisfied with a probability which is close to $1$ , i.e., $1-\epsilon$, for a sufficiently small $\epsilon > 0 $.

\begin{definition}[Epsilon-Satisfaction Equilibrium]\label{DefepsilonSEmixed}\emph{
Let $\epsilon \in \left] 0, 1\right]$. The mixed strategy profile $\bs{\pi}^* \in \triangle\left(\mathcal{A}_1\right)\times\ldots\times\triangle\left(\mathcal{A}_K\right)$ is an epsilon-satisfaction equilibrium ($\epsilon$-SE) of the game $\GameSFmixed$, if for all $k \in \mathcal{K}$, it follows that
\begin{equation}
\bs{\pi}_{k}^* \in \bar{\bar{f}}_k\left( \bs{\pi}_{-k}^* \right),
\end{equation}
where
%
\begin{eqnarray}
\label{EqFepsilon}
\bar{\bar{f}}_k\left( \bs{\pi}_{-k}^* \right) & = & \left\lbrace \bs{\pi}_k \in \triangle\left( \mathcal{A}_k \right): \right. \\
\nonumber
& & \left. \pr{a_k \in f_k\left( \bs{a}_{-k} \right)}{\left(\bs{\pi}_k, \bs{\pi}_{-k}^*\right)} \geqslant 1-\epsilon \right\rbrace.
\end{eqnarray}
}
\end{definition}
From Def. \ref{DefepsilonSEmixed}, it can be implied that if the mixed strategy profile $\bs{\pi}^*$ is an $\epsilon$-SE, it holds that,
\begin{equation}
\pr{a_k \in f_k\left( \bs{a}_{-k} \right)}{\left(\bs{\pi}_k^*, \bs{\pi}_{-k}^*\right)} \geqslant 1 - \epsilon.
\end{equation}
That is, players are unsatisfied with probability $\epsilon$. The relevance of the $\epsilon$-SE is that it models the fact that players can be tolerant to a non-satisfaction level.  At a given $\epsilon$-SE, none of the players is interested in  changing its mixed strategy profile as long as it is satisfied with a probability higher than certain threshold $1 - \epsilon$.
As we shall see, a game might not possess a SE neither in pure nor mixed strategies, but it might possess an $\epsilon$-SE.
 
 \noindent
A thorough  analysis on the existence and uniqueness of the SE in pure strategies and mixed strategies is presented in the next section. Similarly, the conditions  for the existence of an $\epsilon$-SE are also discussed.

\section{Existence and Uniqueness of the Satisfaction Equilibrium}\label{SecExistenceAndUniqueness}

In this section, we study the existence and uniqueness of a satisfaction equilibrium in games in satisfaction form and its corresponding extension in mixed strategies. Particular attention is given to the existence of $\epsilon$-SE in the case where it does not exist at least one SE neither  in pure nor in mixed strategies.

\subsection{Existence of SE in Pure Strategies}\label{SecExistenceSEinPS}

In order to study the existence of a SE in the game $\GameSF$, let the correspondence $F: \mathcal{A} \rightarrow  2^{\mathcal{A}}$ be defined as follows:
\begin{equation}
\label{EqBigF}
F(\bs{a}) = f_1\left( \bs{a}_{-1}\right) \times \ldots \times f_K\left( \bs{a}_{-K}\right).
\end{equation}
Then, a SE exists if and only if
\begin{equation}\label{EqExistence}
\exists \bs{a} \in \mathcal{A}:  \quad	 \bs{a} \in F(\bs{a}).
\end{equation}
This formulation allows us to use existing fixed point (FP) theorems to provide sufficient conditions for the existence of the SE. For instance, one can rely on the fixed point theorem of Knaster and Tarski \cite{Knaster-Tarski-1928} to state the following theorem.
\begin{theorem}[Existence of SE in finite games] \label{TheoExistence2}\emph{
Consider the game $\GameSF$ and  let the set $\mathcal{A}$ have a binary relation denoted by $\preceq$. Let also \\
  $(i)$ $V = \left< \mathcal{A}, \preceq \right>$ be a complete lattice;\\
  $(ii)$ $F\left( \bs{a} \right)$ be non-empty for all $\bs{a}\in \mathcal{A}$;\\
  $(iii)$ the correspondence $F$ in \eqref{EqBigF} satisfies that $\forall \left( \bs{a}, \bs{a}' \right) \in \mathcal{A}^2$, such that $\bs{a} \preceq \bs{a}'$, it holds that
  \begin{equation}
\label{EqOrderedCorrespondance}
\forall \left( \bs{b}, \bs{b}' \right) \in F\left(\bs{a}\right)\times F\left(\bs{a}'\right), \quad \bs{b} \preceq \bs{b}'.
\end{equation}
Then the game has at least one SE in pure strategies.}
 \end{theorem}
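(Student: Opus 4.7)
The plan is to recognize that the three hypotheses together force the correspondence $F$ to reduce to a single-valued, order-preserving self-map on the complete lattice $V$, after which the conclusion follows from the classical Knaster--Tarski fixed point theorem applied to \eqref{EqExistence}.

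The first step is an observation about condition (iii) that is not explicit in the statement but is central to the argument. Applying condition (iii) with $\bs{a}' = \bs{a}$ (which satisfies $\bs{a}\preceq\bs{a}$ by reflexivity), we obtain $\bs{b}\preceq \bs{b}'$ for every pair $(\bs{b},\bs{b}')\in F(\bs{a})\times F(\bs{a})$. Swapping the roles of $\bs{b}$ and $\bs{b}'$ and invoking the antisymmetry of the partial order underlying the lattice $V$ yields $\bs{b}=\bs{b}'$. Combined with hypothesis (ii), this shows that $F(\bs{a})$ is a singleton for every $\bs{a}\in\mathcal{A}$, so the map $\phi:\mathcal{A}\to\mathcal{A}$ defined by $\{\phi(\bs{a})\}=F(\bs{a})$ is well defined.

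Next, I would reinterpret condition (iii) in terms of $\phi$: it states precisely that for all $\bs{a}\preceq\bs{a}'$ we have $\phi(\bs{a})\preceq\phi(\bs{a}')$, i.e., $\phi$ is order-preserving on $V$. Since $V=\langle\mathcal{A},\preceq\rangle$ is a complete lattice by hypothesis (i), the Knaster--Tarski fixed point theorem \cite{Knaster-Tarski-1928} guarantees the existence of some $\bs{a}^{+}\in\mathcal{A}$ with $\phi(\bs{a}^{+})=\bs{a}^{+}$. By construction of $\phi$ and by \eqref{EqBigF}, this is equivalent to $\bs{a}^{+}\in F(\bs{a}^{+})$, i.e., $a_{k}^{+}\in f_{k}(\bs{a}_{-k}^{+})$ for every $k\in\mathcal{K}$, which is exactly the definition of a SE in pure strategies (Definition \ref{DefSE}).

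The main conceptual obstacle, and the step I would flag most carefully, is the first one: condition (iii) as written is very stringent and, as the above shows, already forces $F$ (hence every $f_{k}$) to be a function rather than a genuine correspondence. Once this is noticed, the remainder of the argument is a direct invocation of Knaster--Tarski and does not exploit the finiteness of $\mathcal{A}$ at all---only the complete lattice structure of $V$ is used. If one wished to weaken (iii) to a more standard monotonicity for correspondences (e.g., Veinott's strong set order), the proof would instead require Zhou's extension of Knaster--Tarski to lattice-valued monotone correspondences, but under the hypotheses as stated the single-valued reduction is both correct and the shortest route.
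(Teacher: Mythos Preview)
Your argument is correct and is exactly in line with the paper's approach: the paper does not give a detailed proof of this theorem but simply states it as a consequence of the Knaster--Tarski fixed point theorem, and your write-up supplies precisely the missing verification that hypotheses $(i)$--$(iii)$ put us in the Knaster--Tarski setting.

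The one substantive addition in your proposal is the observation that condition $(iii)$, taken literally with $\bs{a}=\bs{a}'$, collapses $F$ to a single-valued map via antisymmetry of the lattice order. This is correct and is not made explicit in the paper; it is worth keeping, since it clarifies that the hypotheses are stronger than a mere monotonicity assumption on a set-valued map and that no multi-valued fixed point theorem (Zhou, etc.) is actually needed. Your closing remark that finiteness plays no role is also accurate and a useful comment on the scope of the result.
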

Note that  theorem  \ref{TheoExistence2} requires that for all $\bs{a} \in \mathcal{A}$, the set $F\left( \bs{a} \right)$ is non-empty, i.e.,
 \begin{equation}
\label{EqConditionForExistence}
\forall k \in \mathcal{K} \text{ and }  \forall \bs{a}_{-k} \in \mathcal{A}_{-k}, \, \exists
a_k \in \mathcal{A}_k: \, a_k \in f_{k}\left( \bs{a}_{-k}\right).
\end{equation}
In some cases, this condition might appear restrictive. However, in the general context of wireless communications, when a radio device might not provide its QoS, the default action is simply going into standby. This might imply the existence of an action ``do nothing (DN)'' which might appear to avoid the emptiness of $f_k(\bs{a}_{-k})$, whenever it is required. Modeling the existence of the DN action strongly depends on the scenario. For instance, in the case of power allocation games, such action can be the null vector, that is, zero transmit power.

\noindent
In the following, we study the existence of an equilibrium in mixed strategies and an $\epsilon$-SE, which appear to be less restrictive.

\subsection{Existence of the SE in Mixed Strategies}

As in the case of pure strategies, the condition for the existence of a SE in the mixed extension $\GameSFmixed$ boils down to the study of a fixed point inclusion.  Let the correspondence $\bar{F}: \triangle\left(\mathcal{A}_1 \right) \times \ldots \times \triangle\left(\mathcal{A}_K \right) \rightarrow  2^{\triangle\left(\mathcal{A}_1 \right) \times \ldots \times \triangle\left(\mathcal{A}_K \right)}$ be defined as follows:
\begin{equation}
\label{EqBigBarF}
\bar{F} (\bs{\pi}) = \bar{f}_1\left( \bs{\pi}_{-1}\right) \times \ldots \times \bar{f}_K\left( \bs{\pi}_{-K}\right).
\end{equation}
Then, a SE exists if and only if
\begin{equation}\label{EqExistenceMixed}
\exists \bs{\pi} \in \triangle\left(\mathcal{A}_1 \right) \times \ldots \times \triangle\left(\mathcal{A}_K \right):  \quad	 \bs{\pi} \in \bar{F}(\bs{\pi}).
\end{equation}
Thus, all the results of fixed point theory \cite{Border-1985}, in the case of the compact and convex sets, are valid for the study of the existence of the SE in the game $\GameSFmixed$.
Nonetheless, some results are immediate from Def. \ref{DefSEmixed}. For instance, note that if a game in satisfaction form does not have a SE in pure strategies, then, it does not have a SE in mixed strategies neither. This is basically due to the fact that players mix only the actions that guarantee  their satisfaction with unitary probability.
That is, player $k$ mixes a subset of its actions $\mathcal{A}_k' \subseteq \mathcal{A}_k$, i.e., $\forall a_k \in \mathcal{A}_k'$, $\pi_{k,a_k} > 0$, only if the following condition holds, $\forall a_k^* \in \mathcal{A}_k'$,
\begin{eqnarray}
\pr{a_k^* \in f_{k}\left( \bs{a}_{-k}\right)}{\bs{\pi}_{-k}} &= & 1.
\end{eqnarray}
This implies that player $k$ assigns a strictly positive probability to more than one action, i.e., it plays strictly mixed strategies, only if such a set of actions guarantees its satisfaction for all the action profiles $\bs{a}_{-k} \in \mathcal{A}_{-k}$, which are played with non-zero probability. This reasoning might imply that, there might exist several SE in pure strategies but no SE in strictly mixed strategies in the game $\GameSFmixed$.

\subsection{Existence of the Epsilon-Satisfaction Equilibrium}

As shown in the previous subsection, the existence of at least one SE  in the extension in mixed strategies of the game $\GameSF$ remains very strict. Indeed, the game has a SE in mixed strategies if and only if it has a SE in pure strategies. On the contrary, the existence of at least one $\epsilon$-SE is less strict and it does not require the existence of a pure SE. A sufficient and necessary condition for the existence of at least one   $\epsilon$-SE is the following.
\begin{proposition}\emph{
Let $\GameSF$ be a finite game in satisfaction form. Then, if the following condition holds,
\begin{equation}\label{EqEpsilonEQcondition}
\forall k \in \mathcal{K}, \, \exists \bs{a} \in \mathcal{A}: \quad a_k \in f_k\left( \bs{a}_{-k} \right),
\end{equation}
there always exists a strategy profile $\bs{\pi}^* \in \triangle\left(\mathcal{A}_1\right) \times \ldots \times \triangle\left(\mathcal{A}_K\right)$ and an $1 > \epsilon >  0$, such that, $\bs{\pi}^*$ is an $\epsilon$-SE.
}
\end{proposition}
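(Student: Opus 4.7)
The plan is to exploit the hypothesis in a very direct way: for each player $k$, we are handed at least one action profile that satisfies $k$; with a fully mixed strategy profile every action profile is realized with strictly positive probability, and hence each player is satisfied with probability bounded away from $0$. That lower bound, complemented, will furnish a valid $\epsilon < 1$.

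More concretely, I would first invoke \eqref{EqEpsilonEQcondition} to select, for every $k \in \mathcal{K}$, an action profile $\bs{a}^{(k)} = \bigl(a_1^{(k)}, \ldots, a_K^{(k)}\bigr) \in \mathcal{A}$ with $a_k^{(k)} \in f_k\bigl(\bs{a}_{-k}^{(k)}\bigr)$. Next, I would define $\bs{\pi}^*$ as any fully mixed profile, for instance the uniform one: $\pi_{k, A_k^{(n_k)}}^* = 1/N_k$ for every $k$ and every $n_k \in \{1,\ldots,N_k\}$. Because the sets $\mathcal{A}_k$ are finite and nonempty, the probability that the joint action equals $\bs{a}^{(k)}$ is $q_k := \prod_{j \in \mathcal{K}} 1/N_j > 0$, and this is a lower bound on the satisfaction probability of player $k$ under $\bs{\pi}^*$, namely
\begin{equation}
\pr{a_k \in f_k(\bs{a}_{-k})}{\bs{\pi}^*} \;\geq\; q_k \;>\; 0.
\end{equation}

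Let $q^* = \min_{k \in \mathcal{K}} q_k$, which is strictly positive because $\mathcal{K}$ is finite. Set $\epsilon = \max\bigl(1 - q^*, \tfrac{1}{2}\bigr)$ if $q^* = 1$, and $\epsilon = 1 - q^*$ otherwise; in either case $\epsilon \in (0,1)$ and
\begin{equation}
\pr{a_k \in f_k(\bs{a}_{-k})}{\bs{\pi}^*} \;\geq\; q^* \;\geq\; 1 - \epsilon
\end{equation}
for every $k$, so $\bs{\pi}_k^* \in \bar{\bar{f}}_k(\bs{\pi}_{-k}^*)$ by \eqref{EqFepsilon}, and Def.~\ref{DefepsilonSEmixed} is satisfied.

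There is essentially no hard step; the only thing to be careful about is the degenerate boundary case $q^* = 1$, which we sidestep by allowing any $\epsilon$ strictly between $0$ and $1$ (the statement requires $1 > \epsilon > 0$, not that $\epsilon$ be tight). For the converse direction implicit in the claim, one simply observes that if some player $k$ has no satisfying action against any action profile of the others, i.e., $f_k(\bs{a}_{-k}) = \emptyset$ for all $\bs{a}_{-k} \in \mathcal{A}_{-k}$, then $\pr{a_k \in f_k(\bs{a}_{-k})}{\bs{\pi}} = 0$ for every $\bs{\pi}$, ruling out any $\epsilon$-SE with $\epsilon < 1$; so \eqref{EqEpsilonEQcondition} is also necessary.
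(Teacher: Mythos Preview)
Your proof is correct and follows essentially the same route as the paper: select for each player a satisfying action profile, use a fully mixed strategy (the paper allows any profile assigning positive mass to the needed coordinates, you specialize to the uniform one, which the paper itself singles out immediately after its proof), lower-bound each satisfaction probability by the product of the relevant marginal probabilities, and take $\epsilon$ as one minus the minimum of these bounds. If anything you are slightly more careful than the paper in handling the degenerate case $q^{*}=1$ so that $\epsilon$ stays strictly positive, and your remark on necessity is a welcome complement that the paper asserts but does not spell out.
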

\begin{proof}\emph{
Assume that the condition \eqref{EqEpsilonEQcondition} holds. Then, for all $j \in \mathcal{K}$, it holds that the set
\begin{equation}
\mathcal{A}_j^* = \left\lbrace  \bs{a} \in \mathcal{A} : a_j \in f_j\left( \bs{a}_{-j} \right)  \right\rbrace
\end{equation}
is non-empty. Denote by $\bs{a}_j^* = \left( a_{j,1}^*, \ldots,  a_{j,K}^*\right)$ a particular element of the set $\mathcal{A}_j^*$.
Any mixed strategy $\bs{\pi}^{+} \in \triangle\left( \mathcal{A}_1\right)\times\ldots\times\triangle\left( \mathcal{A}_K\right)$, such that
\begin{equation}
\forall (j,k) \in \mathcal{K}^2, \quad \pi_{k,a_{j,k}^*}^+ > 0
\end{equation}
guarantees that $\forall j \in \mathcal{K}$, the action $\bs{a}_{j}^*$ is played with non-zero probability, thus, $\forall k \in \mathcal{K}$,
\begin{eqnarray}\nonumber
\pr{a_k \in f_k\left( \bs{a}_{-k}\right)}{\bs{\pi}^+} & = & \ds\sum_{\bs{a}\in \mathcal{A}} \ind{a_k \in f_k(\bs{a}_{-k})}\ds\prod_{j = 1}^K \pi_{j,a_j}^+ \\
& & = \epsilon_k,
\end{eqnarray}
where $1 \geqslant \epsilon_k \geqslant \ds\prod_{j = 1}^K \pi_{j,a_{k,j}^*}^+ > 0$,
which proves the existence of a mixed strategy profile such that, $\forall k \in \mathcal{K}$,
\begin{equation}
\pr{a_k \in f_k\left( \bs{a}_{-k}\right)}{\bs{\pi}^+} \geqslant 1 - \epsilon,
\end{equation}
where, $\epsilon = 1 - \ds\min_{j \in \mathcal{K}} \epsilon_j$, which completes the proof.
}
\end{proof}
Note that for all $k \in \mathcal{K}$, the condition \eqref{EqEpsilonEQcondition}  only requires the existence of at least one action profile where player $k$ is satisfied, which is less restrictive than conditions \eqref{EqExistence} and \eqref{EqExistenceMixed}. Note also that, as long as \eqref{EqEpsilonEQcondition} holds,  a simple uniform distribution over each individual set of actions $\mathcal{A}_k$ is an $\epsilon$-SE. Indeed, under the assumption of a uniform probability distribution, the following lower bound can be identified $\epsilon \geqslant 1 - \ds\prod_{j =1}^K \frac{1}{N_j}$. Indeed, strict equality is observed in a  game with no SE in pure strategies and satisfying that $\forall k \in \mathcal{K}$, there exists a unique action profile $\bs{a}^{(k)}$ such that $a_k^{(n_k)} \in f_k\left( a_{-k}^{(n_k)} \right)$ and $\forall (j,k) \in \mathcal{K}^{2}$, $\bs{a}^{(j)} \neq\bs{a}^{(k)}$.
 
\subsection{Multiplicity of the SE and Equilibrium Selection}\label{SecESE}
In general, it is difficult to provide the conditions to observe a unique SE for a general set of correspondences $\lbrace f_{k} \rbrace_{k \in \mathcal{K}}$. As we shall see in Sec. \ref{SecApplications}, the set of SE is often not unique in games modeling decentralized self-configuring wireless networks,  and thus, an equilibrium selection process might be required.

\noindent
We start our analysis pointing the fact that using a higher transmit power level or using a more complex modulation scheme (e.g., in the sense of the size of the constellation) might require a higher energy consumption and thus, reduce the battery life time of the transmitters. In this scenario, one might imply that radio devices are interested in satisfying their required QoS with the lowest effort. Here, we can express the effort, for instance, in terms of energy consumption or signal processing complexity.

\noindent
In the following, our interest is to select an action profile which allows all the players to be satisfied with the lowest effort. We refer to this action profile as efficient satisfaction equilibrium (ESE).

\noindent
The game where each player aims to satisfy its QoS with the minimum effort can be formulated as a game in normal form with constrained set of actions,
\begin{equation}\label{EqGameTilde}%
\GameESF.
\end{equation}
Here, for all  $k \in \mathcal{K}$, the cost function $c_k : \mathcal{A}_k \rightarrow \left[ 0, 1 \right]$ satisfies the following condition, $\forall (a_k, a_k') \in \mathcal{A}_k^2$, it holds that
\begin{equation}
\label{EqFunctionC}
c_k\left( a_k \right) < c_k\left( a_k' \right),
\end{equation}
if and only if, $a_k$ requires a lower effort than action $a_k'$ when it is played by player $k$.
For instance, if transmitters are choosing the transmit power level, one can consider the cost function $c_k$ as the identity function, that is, the transmitters are concerned by the amount of power they use on their transmissions.

\noindent
Following this reasoning, the set of ESE of the game $\GameSF$, with respect to the cost functions $c_k$, corresponds to  the set of GNE of the game $\GameESF$.

\noindent
It is important to note that in the game $\GameESF$, the competitive interaction between all players is not modeled by the cost functions $\lbrace c_k \rbrace_{k \in \mathcal{K}}$. For instance, the cost function of player $k$, $c_k$, depends only on its chosen action $a_k$. In this game formulation $\tilde{\mathcal{G}}$, the interaction between players is modeled by the correspondence $f_k$, which is defined over the set of action profiles $\mathcal{A}_{-k}$.

\noindent
An important remark here is that if all players assign the same cost (or effort) to all their actions, then the sets of ESE and the set of SE are identical. This implies that the interest of the formulation $\tilde{\mathcal{G}}$ is precisely that players can differentiate the effort of playing one action or another in order to select one (satisfaction) equilibrium among all the existing equilibria of the game $\gameSF$. Thus, the existence and uniqueness of this efficient SE plays an important role in the equilibrium selection. We analyze these two properties in the sequel of this section.
\subsubsection{Existence of an ESE}
\noindent
The main results on the existence of an ESE are based on the fact that the game $\GameESF$ is a potential game with constrained actions sets (see Appendix \ref{AppPGCSA}), as shown in the following proposition.

\begin{proposition}\label{PropGtildeisECPG} \emph{
The game $\GameESF$  is an exact constrained potential game, with potential function $\phi: \mathcal{A} \rightarrow \mathds{R}$, such that, $\forall \bs{a} \in \mathcal{A}$,
\begin{equation}\label{EqPotential}
\phi\left( \bs{a} \right) = \ds\sum_{k = 1}^{K} c_{k}\left( a_k \right).
\end{equation}
}
\end{proposition}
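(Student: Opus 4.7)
The plan is to verify the defining identity of an exact potential game directly, exploiting the crucial structural feature that, in the game $\GameESF$, each cost function $c_k$ depends only on player $k$'s own action $a_k$, and is completely independent of $\bs{a}_{-k}$. I will first recall from Appendix \ref{AppPGCSA} the definition of an exact constrained potential game, which requires (i) a function $\phi:\mathcal{A}\to\mathds{R}$ such that for every player $k$, every $\bs{a}_{-k}\in\mathcal{A}_{-k}$, and every pair $(a_k,a_k')\in\mathcal{A}_k^2$,
\begin{equation}
c_k(a_k')-c_k(a_k)=\phi(a_k',\bs{a}_{-k})-\phi(a_k,\bs{a}_{-k}),
\end{equation}
and (ii) the same constraint correspondences $\{f_k\}_{k\in\mathcal{K}}$ governing admissible deviations. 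The candidate potential is the one given in \eqref{EqPotential}.

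The key step is a one-line computation. Fix $k\in\mathcal{K}$, $\bs{a}_{-k}\in\mathcal{A}_{-k}$, and $(a_k,a_k')\in\mathcal{A}_k^2$. By the definition of $\phi$,
\begin{equation}
\phi(a_k',\bs{a}_{-k})-\phi(a_k,\bs{a}_{-k})=c_k(a_k')-c_k(a_k)+\sum_{j\in\mathcal{K}\setminus\{k\}}\bigl(c_j(a_j)-c_j(a_j)\bigr),
\end{equation}
and the second sum is identically zero, so the potential identity holds. This is valid for every $k$, so the property is established for all unilateral deviations.

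Next I will argue that the constraint structure carries over without modification: the admissible actions of player $k$ at profile $\bs{a}$ are exactly $f_k(\bs{a}_{-k})$ in both $\GameESF$ and the associated potential game, so no additional compatibility condition needs to be checked. Consequently, $\phi$ serves as an exact potential for $\GameESF$ under its constraint correspondences, which is precisely the definition of an exact constrained potential game.

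I do not anticipate a genuine obstacle here: the reason the statement holds is that the cost functions are \emph{separable} across players, making $\GameESF$ a so-called dummy game in the Monderer--Shapley taxonomy, for which the sum of individual costs is automatically an exact potential. The only care required is to make sure the notion of potential game being used is the \emph{constrained} one of Appendix \ref{AppPGCSA}, so that Proposition \ref{PropGtildeisECPG} can later be invoked to derive existence of the ESE (i.e., a GNE of $\GameESF$) via the minimizers of $\phi$ restricted to the set of feasible profiles.
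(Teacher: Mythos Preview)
Your proof is correct: the separable structure of the costs $c_k(a_k)$ makes the verification of the potential identity a one-line telescoping computation, and this is exactly the intended argument. The paper itself does not spell out a proof of this proposition---it is presented as a ``simple observation''---so your direct verification is precisely what is implicitly assumed.
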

This simple observation allows us to obtain immediate results on the existence of the ESE:
\begin{proposition}[Existence of the ESE]\label{PropExistenceESE} \emph{
The  game $\GameESF$ \eqref{EqGameTilde} has at least one efficient SE if the graphs of the correspondences $f_1, \ldots, f_K$ are non-empty and identical.
}
\end{proposition}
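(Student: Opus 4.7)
The plan is to derive the existence of an ESE by combining the potential game structure given in Prop.~\ref{PropGtildeisECPG} with a standard minimization argument on the potential. First, define
\begin{equation*}
S = \left\{ \bs{a} \in \mc{A} : \forall k \in \mc{K}, \, a_k \in f_k(\bs{a}_{-k}) \right\},
\end{equation*}
which, by Def.~\ref{DefSE}, is precisely the set of SE of $\gameSF$. Viewing each correspondence $f_k$ as the subset $\mathrm{Gr}(f_k) = \{\bs{a} \in \mc{A} : a_k \in f_k(\bs{a}_{-k})\}$ of $\mc{A}$, one has $S = \bigcap_{k \in \mc{K}} \mathrm{Gr}(f_k)$. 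The hypothesis that the graphs of $f_1,\ldots,f_K$ are non-empty and identical then implies that $S$ is non-empty and coincides with this common graph.

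Since $\mc{A}$ is finite, the potential $\phi(\bs{a}) = \sum_{k=1}^{K} c_k(a_k)$ attains its minimum on $S$; let $\bs{a}^*$ be any such minimizer. The plan is then to prove that $\bs{a}^*$ is a GNE of the constrained game $\gameESF$, which by the identification made in Sec.~\ref{SecESE} is an ESE of $\gameSF$. Feasibility is immediate: $\bs{a}^* \in S$ gives $a_k^* \in f_k(\bs{a}_{-k}^*)$ for every $k \in \mc{K}$.

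For individual optimality, fix $k$ and any admissible deviation $a_k' \in f_k(\bs{a}_{-k}^*)$. By definition, $(a_k', \bs{a}_{-k}^*) \in \mathrm{Gr}(f_k)$, and by the identical-graphs hypothesis this same profile lies in $\mathrm{Gr}(f_j)$ for every $j \neq k$; hence $(a_k', \bs{a}_{-k}^*) \in S$. Exactness of $\phi$, from Prop.~\ref{PropGtildeisECPG}, yields
\begin{equation*}
\phi(a_k', \bs{a}_{-k}^*) - \phi(\bs{a}^*) = c_k(a_k') - c_k(a_k^*),
\end{equation*}
and the minimality of $\bs{a}^*$ on $S$ forces this difference to be non-negative, so $c_k(a_k^*) \leqslant c_k(a_k')$. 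This is precisely the GNE condition for player $k$, and the argument is complete.

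The main obstacle I expect is the verification that every feasible unilateral deviation from $\bs{a}^*$ remains inside $S$. Without such closure, minimizing $\phi$ over $S$ would not in general produce a GNE, because a single player could unilaterally move to an action in its own feasible set $f_k(\bs{a}_{-k}^*)$ while knocking some other player off its satisfaction correspondence. The identical-graphs hypothesis is precisely the structural assumption that rules this out, making it the critical ingredient of the proof; relaxing it would require a considerably more delicate argument.
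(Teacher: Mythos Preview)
Your proof is correct and follows essentially the same route as the paper: the paper invokes Theorem~\ref{TheoExistencePureStrategiesECPG}, whose proof likewise observes that the identical-graphs hypothesis forces $\mathcal{F} = \wideparen{\mathcal{F}}$ (so every feasible unilateral deviation stays feasible), and then picks the extremizer of the potential $\phi$ over this common set as the equilibrium. Your argument is a direct, self-contained version of this, with the closure-under-deviations step and the potential-extremum step spelled out explicitly.
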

The proof of Prop. \ref{PropExistenceESE} follows from Th. \ref{TheoExistencePureStrategiesECPG}  in Appendix \ref{AppPGCSA}. Other results, on the existence of an ESE and more importantly on the convergence of the best response dynamics \cite{Rose-Perlaza-CommMag2011} to achieve an ESE are immediately obtained by exploiting properties such as supermodularity or submodularity of the potential function (See \cite{Altman-Altman-2003}).

\subsubsection{Multiplicity of an ESE}

One the desirable features of the ESE is to be unique. In particular, this will allow engineers to estimate the operating point of decentralized networks. However, from existing literature in constrained games \cite{Altman-Altman-2003}, it can be stated that uniqueness is observed only under certain conditions. Thus, in Appendix \ref{AppMultiplicityESE}, we provide a general method to analyze the number of ESE of a given game $\gameESF$. This method is basically a tool for the network analysis, it  does not pretend to be an algorithm which can be directly implemented in networks.

\noindent
In the next section, we compare the SE notion with existing equilibrium concepts such as NE and GNE.

\section{Satisfaction Equilibrium and other Equilibrium Concepts}\label{SecSEandOtherEquilibria}

In the following, we highlight the main differences between the SE and other equilibrium notions such as NE and GNE. However, before we start, we point out the differences between the normal form and the satisfaction form formulations.
\subsection{Games in Normal Form and Satisfaction Form }
The main difference between the normal form $\GameNF$ and the satisfaction form $\GameSF$ is that the former defines a preference order \cite{Fishburn-1970}, which can be modeled by a family of utility functions $u_1, \ldots, u_K$ in the sense of von Neumann - Morgenstern \cite{Morgestein-Neumann-44}, i.e., given an action profile $\bs{a}_{-k} \in \mathcal{A}_{-k}$, player $k$ can rank any pair of its actions $(a_k,a_k') \in \mathcal{A}_k^2$ such that either $u_k(a_k,\bs{a}_{-k}) < u_k(a_k',\bs{a}_{-k})$,  $u_k(a_k,\bs{a}_{-k}) = u_k(a_k',\bs{a}_{-k})$ or $u_k(a_k,\bs{a}_{-k}) > u_k(a_k',\bs{a}_{-k})$. In the latter, player $k$ determines only whether an action satisfies its individual conditions or not, i.e.,  $a_k \in f_k\left( \bs{a}_{-k}\right)$ or $a_k \notin f_k\left( \bs{a}_{-k}\right)$, respectively. Hence, the satisfaction form does not require the existence of a utility function \cite{Osborne-Rubinstein-Book, Lasaulce-Tembine-Book-2011} but a satisfaction correspondence $f_k$, which can be seen as an additional degree of freedom of this formulation.

\noindent
An interesting observation is that, by assuming a utility function of the form $u_k: \mathcal{A} \rightarrow \lbrace 0,1 \rbrace$ a game in normal form can also be used to describe a situation when players are interested only on the satisfaction of their individual constraints. Nonetheless, the notions of Nash equilibrium (NE) and SE do not necessarily correspond to each other. The NE in pure strategies in the context of games in normal form \cite{TiroleBook1991} can be defined as follows.
\begin{definition}[Nash Equilibrium in PS\cite{Nash-1950}]\label{DefNE} \emph{Consider a game $\GameNF$ in normal form. An action profile $\boldsymbol{a} \in \mathcal{A}$ is an NE in pure strategies if it satisfies, for all $k \in \mathcal{K}$ and for all $\bs{a}'_k \in \mathcal{A}_k$,
\begin{equation}
u_k(a_k,\bs{a}_{-k}) \geqslant  u_k(a'_k,\bs{a}_{-k}).
\end{equation}}
\end{definition}
Now, consider the Ex. \ref{ExSE} and define the game in satisfaction form $\GameSF$, with $f_k$ defined by \eqref{EqfkExample1} and a game in normal form  $\GameNF$, with
$u_k: \mathcal{A}_1 \times \ldots \times  \mathcal{A}_K \rightarrow \lbrace 0,1\rbrace$ 
defined as follows: 
\begin{equation}
\label{EqVtility}
u_k\left( a_k, \bs{a}_{-k}\right) = \ind{a_k \in f_{k}\left( \bs{a}_{-k}\right)}.
\end{equation}
Now, we compare both the set of SE $\mathcal{A}_{\mathrm{SE}}$ of the game  $\GameSF$ and the set of NE $\mathcal{A}_{\mathrm{NE}}$ of the game $\GameNF$. Note that from Def. \ref{DefNE} and Def. \ref{DefSE}, it can be immediately implied that any SE of the game $\gameSF$ is an NE of the game $\gameNFmodif$. This is basically, because at the SE, all players obtain a unitary utility, and since the range of the utility function is binary $\lbrace 0, 1 \rbrace$, no other action is able to give a  higher utility. The converse is not true, that is, an NE of the game $\GameNF$ is not necessarily a SE of the game $\GameSF$. Consider for instance the game realization ($K = 2$, $N_1 = N_2 = 2$) in Fig. \ref{FigMatrixExample}. Note that therein, the game $\GameNF$ has $2$ NE in pure strategies, which are the action profiles $(A_1^{(2)}, A_2^{(1)})$ and $(A_1^{(1)},A_2^{(2)})$, while the game $\GameSF$ has only one SE, which is the action profile  $(A_1^{(1)},A_2^{(2)})$.
\begin{figure}[h]
\begin{center}
$\begin{array}{|c|c|c|}\hline
\scriptstyle P_1 \backslash P_2 & A_2^{(1)} &  A_2^{(2)} \\ \hline
A_1^{(1)}  & (0,0) &  (1,1) \\ \hline
A_1^{(2)} & (1,0) &  (0,0) \\ \hline
\end{array}$
\end{center}
\caption{Game in normal form $\GameNF$, with $\mathcal{K}=\lbrace 1, 2 \rbrace$,  $\mathcal{A}_k = \lbrace A_k^{(1)}, A_k^{(2)} \rbrace$, for all $k \in \mathcal{K}$.  Player $1$ chooses rows and player $2$ chooses columns. In a pair $(v_1,v_2) \in \lbrace 0,1 \rbrace^{2}$, $v_1$ and $v_2$ are the utilities obtained by player $1$ and $2$, respectively.}
\label{FigMatrixExample}
\end{figure}
This simple example shows that, the formulation in normal-form might lead to equilibria where not all the players are satisfied even when the joint satisfaction of all players is feasible. This shows that games in normal-form do not properly model the case where players are interested only in the satisfaction of individual conditions.  We conclude the comparison between the games $\GameNF$ and $\GameSF$, by establishing the following condition between their sets of equilibria,
\begin{equation}\label{EqSetOfEinSEandNE}
\mathcal{A}_{\mathrm{SE}} \subseteq \mathcal{A}_{\mathrm{NE}} \subseteq \mathcal{A}.
\end{equation}
This confirms the intuition that the notion of SE is more restrictive than the notion of NE, that is, an SE in the game $\gameSF$ is a Pareto optimal NE in the game $\gameNF$, that is, an action profile where all players are satisfied.

\subsection{Satisfaction Equilibrium  and Nash Equilibrium}
The definition of NE (Def. \ref{DefNE}) can be obtained from the definition of SE (Def. \ref{DefSE}) by assuming that, for all $k \in \mathcal{K}$, the satisfaction correspondence $f_k$ is defined as follows,
\begin{equation}\label{EqFBestResponse}
f_k \left( \bs{a}_{-k} \right) =  \ds\arg\max_{a_k^* \in \mathcal{A}_k} u_k\left( a_k^*, \bs{a}_{-k}\right).
\end{equation}
The satisfaction correspondence $f_k$ as defined in \eqref{EqFBestResponse} is known in the game theoretic literature as the best response correspondence \cite{TiroleBook1991}. Then, under this formulation, the set of SE of the game in satisfaction form $\GameSF$ is identical to the set of NE of the game in normal form $\GameNF$. This reasoning might lead us to think that the satisfaction form  as well as the notion of SE are generalizations of the classical normal form and the notion of Nash equilibrium \cite{Nash-1950}, respectively.

\subsection{Satisfaction Equilibrium and Generalized Nash Equilibrium}
The GNE in pure strategies (PS) in games in normal form with constrained set of actions, as introduced by Debreu in \cite{Debreu-1952} and later by Rosen in \cite{Rosen-65}, can be defined as follows.
 \begin{definition}[Generalized NE in PS \cite{Debreu-1952}]\label{DefGNE} \emph{An action profile $\boldsymbol{a}^* \in \mathcal{A}$ is a generalized Nash equilibrium (GNE) of the game $\GameSE$ if and only if
\begin{equation}\nonumber
\forall k \in \mathcal{K}, \quad a_k^* \in f_{k}\left(\bs{a}_{-k}^*\right) \mbox{ and }	
\end{equation}
\begin{equation}\nonumber
	\forall a_k \in f_k\left(\bs{a}_{-k}^*\right), \; u_k(a_k^*,\bs{a}_{-k}^*) \geqslant u_k(a_k,\bs{a}_{-k}^*).
\end{equation}
}
\end{definition}
Note that the definition of SE (Def. \ref{DefSE}) can be obtained from the definition of GNE (Def. \ref{DefGNE}) by assuming the following condition, $\forall k \in \mathcal{K}$ and $\forall \bs{a} \in \mathcal{A}$
\begin{equation}\label{EqSEandGNE}
 \quad u_k(a_k,\bs{a}_{-k}) = c, \quad  \mbox{with $c \in \mathds{R}_+$}.
\end{equation}
Under assumption \eqref{EqSEandGNE}, the set of GNE of the game in normal form with constrained set of actions $\GameSE$ and the set of SE of the game  $\GameSF$ in satisfaction form  are identical. This observation does not necessarily imply that the satisfaction form  as well as the notion of SE are particular cases of the classical normal form with constrained set of actions and the notion of GNE \cite{Debreu-1952}, respectively. Indeed, there exist fundamental differences: $(i)$ in the game $\gameSE$, the set of available actions for player $k$ are determined by the complementary vector $\bs{a}_{-k}$. On the contrary, in the game $\gameSF$, the set of available actions of player $k$ is always the set $\mathcal{A}_k$. $(ii)$ In the game $\gameSE$, a rational player $k$ determines the action to play $a_k$ by following two different steps. First, it determines the set of available actions $f_{k}(\bs{a}_{-k})$ and second, it determine the actions $a_k \in f_{k}(\bs{a}_{-k})$ that maximize  $u_k$. In contrast, in the game $\gameSF$, player $k$ does not require any optimization capability. $(iii)$ In the game $\gameSE$, the interpretation of $f_k(\bs{a}_{-k}) = \emptyset$ is that player $k$ cannot play since none of its actions is available given the actions of the others $\bs{a}_{-k}$. On the contrary, in the game $\gameSF$,  the interpretation of $f_k(\bs{a}_{-k}) =\emptyset$ is that player $k$ can take any of its actions $a_k \in \mathcal{A}_k$, but none of them achieves satisfaction. This difference might appear subtile but it makes a strong difference when the equilibrium must be learnt dynamically \cite{Rose-Perlaza-CommMag2011}. Indeed, in an eventual exploration phase of a learning algorithm, at each stage, player $k$ always has a non-empty set of actions to test in the game $\gameSF$ regardless of the actions of all the other players. Clearly, this is not the case in the game $\gameSE$, which constraints the learning process.  
 
\noindent
In the following, we compare the set of equilibria of both games $\GameSE$ and $\GameSF$, for a general definition of the utility functions $u_k$, for all $k \in \mathcal{K}$.
Let the sets of GNE of the game $\gameSE$ and the set of SE of the game $\gameSF$ be denoted by $\mathcal{A}_{\mathrm{GNE}}$ and $\mathcal{A}_{\mathrm{SE}}$, respectively. Now, note that from Def. \ref{DefGNE} and Def. \ref{DefSE}, it follows  that any GNE in $\gameSE$ is a SE in $\gameSF$, i.e.,
\begin{equation}
\mathcal{A}_{\mathrm{GNE}} \subseteq \mathcal{A}_{\mathrm{SE}} \subseteq  \mathcal{A}.
	\label{EqSetOfEinNFCAS}
\end{equation}
The strict equality $\mathcal{A}_{\mathrm{GNE}} = \mathcal{A}_{\mathrm{SE}}$ is achieved when the functions $u_k$ are chosen following \eqref{EqSEandGNE}.
The condition in \eqref{EqSetOfEinNFCAS} verifies the intuition that the notion of SE in games in satisfaction form, is less restrictive than the notion of GNE in games in normal form with constrained action sets. Note also that from Def. \ref{DefSE}, it might be implied that several SE might exist, while no GNE necessarily exists. This is basically due to the fact that the existence of a GNE depends on both the functions $u_k$ and $f_k$, while the existence of a SE depends uniquely on the correspondences $f_k$, with $k \in \mathcal{K}$.
Conversely, the existence of a GNE implies the existence of at least one SE.

\noindent
In the following, we focus on designing behavioral rules for the radio devices in order to let them to learn one satisfaction equilibrium in decentralized self-configuring networks.
\section{Learning Satisfaction Equilibrium}\label{SecLearningSE}

In this section, we study a behavioral rule that allows radio devices to learn a satisfaction equilibrium in a fully decentralized fashion. Here, the underlying assumption is that players do not need to observe the value of its achieved utility, i.e., transmission rate, energy efficiency, etc., but only to know whether they are satisfied or not at each stage of the learning process. This implies only a $1$-bit length message exchange between the corresponding transmitter-receiver pairs. In the following, we formulate the corresponding  learning problem and later, we introduce the behavioral rules that allow players to learn the SE.

\subsection{The Learning Problem Formulation}\label{SecLearningProblemFormulation}
We describe the SE learning process in terms of elements of the game $\GameSF$ as follows.  Assume that time is divided in intervals and denote each interval by the index $n \in \mathds{N}$. Each interval ends when  each player  has played at most once. Denote the action taken by player $k$ at interval $n$ by $a_k(n)$. At each interval $n$, player $k$ observes whether it is satisfied or not, i.e., it observes a binary variable
\begin{equation}\label{EqTildeV}
\tilde{v}_k(n) = \ind{a_k(n) \in f_k\left( \bs{a}_{-k}(n)\right)}.
\end{equation}
 Our intention is to learn at least one  SE by letting the players to interact following particular behavioral rules. We say that players learn an equilibrium in pure strategies if, after a given finite number of time intervals, all players have chosen an action which achieves satisfaction, and thus, no other action update takes place. %
\subsection{Learning the SE in Pure Strategies}\label{SecLearningSEinPS}
Before we present the behavioral rule which allows players to achieve one of the equilibrium of the game $\GameSF$, we state the following hypothesis:\\

$(i)$ The game $\GameSF$ has at least one SE in pure strategies.\\

$(ii)$ For all $k \in \mathcal{K}$, it holds that $\forall \bs{a}_{-k} \in \mathcal{A}_{-k}$, the set $f_k\left( \bs{a}_{-k}\right)$ is not empty. \\

$(iii)$ The sets $\mathcal{K}$ and $\left\lbrace \mathcal{A}_k \right\rbrace_{k \in \mathcal{K}}$, are finite.\\

\noindent
The first hypothesis ensures that the SE learning problem is well-posed, i.e., radio devices are assigned a feasible task. The second hypothesis refers to the fact that, each radio device is  always able to find a transmit/receive configuration with which it can be considered satisfied given the transmit/receive configuration of all the other radio devices. This assumption might appear restrictive but it is not necessarily the case, see the discussion on the ``do nothing'' action in Sec. \ref{SecExistenceSEinPS}. 
The third hypothesis is considered in order to ensure that our algorithm is able to converge in finite time.

\noindent
Under the assumption that all hypothesis hold, each player chooses its own action as follows. The first action of player $k$, denoted by $a_k(0)$, is taken following an arbitrary probability distribution $\hat{\bs{\pi}}_k(0) \in \triangle\left(\mathcal{A}_k\right)$. Often, such a probability $\hat{\bs{\pi}}_k(0)$ is the uniform probability distribution.
At time interval $n > 0$,  player $k$ changes its action if and only if it is not satisfied, i.e, $\tilde{v}_k(n-1) = 0$. In this case, the next action is chosen following a probability distribution $\hat{\bs{\pi}}_k(n)$ to which we refer to as probability distribution of exploration.  If player $k$ is satisfied  i.e, $\tilde{v}_k(n-1) = 1$, then, it keeps playing the same action.
Hence, we can write that,
\begin{eqnarray}\label{EqActionAtIntervalN}
	a_k(n) =\left\lbrace
	\begin{array}{lcl}
	a_k(n-1) & \text{ if } & \tilde{v}_{k}(n-1) = 1  \\
	a_k(n) \sim \hat{\bs{\pi}}_k(n)  & \text{ if } & \tilde{v}_{k}(n-1) = 0
	\end{array}. 						
	\right.
\end{eqnarray}
The behavioral rule \eqref{EqActionAtIntervalN} is based on the proposal in \cite{Ross-07}. Here, note that a larger class of individual constraints can be considered due to the formulation of \eqref{EqTildeV}. Under this formulation, the room for optimization is on the design of $\hat{\bs{\pi}}_k(n)$ and its evolution over time. However, we left this issue out of the scope of the paper and no particular probability distribution is assumed.
We formalize the behavioral rule \eqref{EqActionAtIntervalN} in the Alg. \ref{Algo1}. Regarding the convergence of this behavioral rule, we provide the following proposition.
\begin{algorithm}
\begin{algorithmic}[1]
\REQUIRE { At each instant $n > 0$: $\tilde{v}_k(n) $}.
\STATE $n = 0$;
\STATE $\forall n_k \in \lbrace 1, \ldots, N_k \rbrace$,
\begin{eqnarray}
\nonumber
\hat{v}_{k,A_k^{(n_k)}}(0) & = & 0,\\
\nonumber
\hat{\pi}_{k,A_k^{(n_k)}}(0) & = & \frac{1}{N_k}.
\end{eqnarray}
\STATE $a_k(0) \sim \hat{\bs{\pi}}_k(0)$;
\FORALL {$n > 0$}
\STATE $\forall n_k \in \lbrace 1, \ldots, N_k \rbrace$, update $\hat{\bs{\pi}}_k(n)$.
\STATE
\begin{eqnarray}\nonumber
	a_k(n) =\left\lbrace
	\begin{array}{ll}
	a_k(n-1) & \text{ if } \tilde{v}_{k}(n-1) = 1  \\
	a_k(n) \sim \hat{\bs{\pi}}_k(n) & \text{ otherwise. } 
	\end{array} 						
	\right.
\end{eqnarray}
\ENDFOR \textbf{end}
\end{algorithmic}
\caption{Learning the SE of the Game $\GameSF$ [Player $k \in \mathcal{K}$]}\label{Algo1}
\end{algorithm}
\begin{proposition}\label{PropConvergenceAlgo1}\emph{
The behavioral rule  \eqref{EqActionAtIntervalN} with probability distributions $\bs{\pi}_k = \left( \pi_{k,A_k^{(1)}}, \ldots, \pi_{k,A_k^{(N_k)}} \right) \in \triangle\left( \mathcal{A}_k\right)$, with $k \in \mathcal{K}$,
converges to a SE of the game $\GameSF$ in finite time if for all $k \in \mathcal{K}$ and for all $n_k \in \lbrace 1, \ldots, N_k \rbrace$, it holds that,
\begin{equation}\label{EqNonZeroProb}
\pi_{k,A_k^{(n_k)}}(n) >  0,
\end{equation}
at each time interval $n \in \mathds{N}$, and assumptions $(i)$, $(ii)$ and $(iii)$ always hold.
}
\end{proposition}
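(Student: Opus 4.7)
The plan is to model Algorithm \ref{Algo1} as a homogeneous Markov chain on the finite state space $\mathcal{A}$ (hypothesis $(iii)$), where the state at time $n$ is the joint action profile $\bs{a}(n)$. The one-step transition kernel is dictated by \eqref{EqActionAtIntervalN}: a satisfied player keeps its action deterministically, while an unsatisfied player $k$ draws a fresh action from $\hat{\bs{\pi}}_k(n)$, which by \eqref{EqNonZeroProb} puts strictly positive mass on every element of $\mathcal{A}_k$. The first sub-step is to characterize the absorbing states: by Definition \ref{DefSE} and \eqref{EqActionAtIntervalN} every SE is absorbing, and conversely at any non-SE state at least one player is unsatisfied and transitions with positive probability to a different profile; hence the absorbing set coincides with the set of SE, which is non-empty by hypothesis $(i)$.

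Second, the main technical step, I would show that from every state $\bs{a} \in \mathcal{A}$ there is positive probability of reaching the set of SE within a uniformly bounded number of steps. I would fix an SE $\bs{a}^+$ from hypothesis $(i)$ and consider the positive-probability scenario in which, at each stage, every unsatisfied player draws its $\bs{a}^+$-component; the probability of this event at stage $n$ is bounded below by $\prod_{k \in \mathcal{K}} \hat{\pi}_{k, a_k^+}(n) > 0$ by \eqref{EqNonZeroProb}. Tracking the discrepancy $D(n) = \{k \in \mathcal{K} : a_k(n) \neq a_k^+\}$, along this scenario only satisfied players whose current action differs from $a_k^+$ can preserve the discrepancy, while all other components align with $\bs{a}^+$ after one transition. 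Using hypothesis $(ii)$, which guarantees $f_k(\bs{a}_{-k}) \neq \emptyset$ for all $\bs{a}_{-k}$, one can interpose additional positive-probability moves of the remaining unsatisfied players that dislodge such ``stubborn'' satisfied players out of their satisfaction set, enabling them to realign with $\bs{a}^+$ at a later stage. The finite cardinality $(iii)$ of $\mathcal{A}$ then bounds the total number of stages required to empty $D(n)$.

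Third, once a uniform lower bound $p > 0$ on the probability of absorption within some $M$ steps is obtained, the conclusion follows from a standard geometric-tail argument: the probability that the chain has not reached the absorbing set by time $mM$ is at most $(1-p)^m \to 0$ as $m \to \infty$, so absorption occurs in almost surely finite time, which is exactly the claim of the proposition.

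The main obstacle is the second step: the map ``align the unsatisfied players with $\bs{a}^+$'' does not shrink $D(n)$ monotonically, since a satisfied but misaligned player is frozen by \eqref{EqActionAtIntervalN} and can be dislodged only through the joint move of some of its opponents. A clean way to finish is to argue directly on the reachability graph of the chain: any closed class consisting entirely of non-SE states would be a set of action profiles from which $\bs{a}^+$ is never accessed by any sequence of full-support unsatisfied-player draws, and the existence of such a class can be excluded by combining hypotheses $(i)$ and $(ii)$ with the full-support condition \eqref{EqNonZeroProb}, thereby yielding the required positive absorption probability and closing the proof.
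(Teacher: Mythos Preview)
Your Markov-chain framing is considerably more careful than the paper's own argument, which is a two-line sketch: condition \eqref{EqNonZeroProb} ``implies that every action profile will be played at least once with non-zero probability during a sufficiently large time interval'', hence some SE profile is eventually visited, and by \eqref{EqActionAtIntervalN} an SE is absorbing. The paper does not set up the absorbing-chain machinery, does not isolate the reachability question, and does not invoke hypothesis $(ii)$ at all in the argument; in that sense your second step already goes beyond what the paper attempts.

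The genuine gap is exactly where you flag it. You assert that hypothesis $(ii)$ lets the unsatisfied players ``dislodge'' a frozen satisfied-but-misaligned player, and later that a closed class containing no SE ``can be excluded by combining hypotheses $(i)$ and $(ii)$ with \eqref{EqNonZeroProb}''. Neither claim follows from the stated hypotheses. Hypothesis $(ii)$ only guarantees that each player has \emph{some} satisfying reply to every opponent profile; it says nothing about the ability of the unsatisfied players to push a currently satisfied player out of $f_k(\bs{a}_{-k})$. Concretely, take $K=3$, $\mathcal{A}_k=\{0,1\}$, $f_1(\cdot)\equiv\{0,1\}$, and for $a_1=0$ set $f_2(0,a_3)=\{a_3\}$, $f_3(0,a_2)=\{1-a_2\}$, while for $a_1=1$ set $f_2(1,\cdot)=f_3(1,\cdot)=\{0,1\}$. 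Hypotheses $(i)$--$(iii)$ all hold (every profile with $a_1=1$ is an SE, and every $f_k$ is non-empty), yet if $a_1(0)=0$ then player~$1$ is satisfied and frozen forever, and players~$2$ and~$3$ can never be simultaneously satisfied: the slice $\{0\}\times\mathcal{A}_2\times\mathcal{A}_3$ is a closed class of the dynamics containing no SE, and rule \eqref{EqActionAtIntervalN} fails to converge with positive probability. This is precisely the clipping-action phenomenon the paper raises immediately after the proposition (Def.~\ref{DefClippingAction} and Prop.~\ref{PropNonConvergenceSESA}); your proposed route through $(ii)$ does not close it, and the paper's sketch does not confront it either. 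A minor additional point: since $\hat{\bs{\pi}}_k(n)$ may vary with $n$, the chain is not homogeneous, so the uniform lower bound $p$ in your third step would also need a separate justification even if reachability held.
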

The proof of Prop. \ref{PropConvergenceAlgo1} follows simply from the fact that \eqref{EqNonZeroProb} implies that every action profile will be played at least once with non-zero probability during a sufficiently large time interval. Then, since at least one SE exists, this action profile will be played at least once. Now, from \eqref{EqActionAtIntervalN}, it follows that once a SE is played, no player changes its current action. Thus, convergence is observed. 
 
\noindent
From the reasoning above, it can be concluded that any probability distribution $\hat{\bs{\pi}}_{k}(n)$ such that all actions have a non-zero probability of being played, for all $n$, can be chosen as the probability distribution of exploration. However, the choice of this probability distributions might impact the convergence time. Two particular ways for building the probability distribution $\hat{\bs{\pi}}_k(n)$ are proposed in  \cite{Ross-07}. In the first case, a uniform probability distribution during the whole learning process was used. That is, for all $k \in \mathcal{K}$ and for all $n_k \in \lbrace 1, \ldots, K \rbrace$,
\begin{equation}\label{EqUniformDistrib}
\hat{\pi}_{k,A_k^{(n_k)}}(n) = \frac{1}{N_k}.
\end{equation}
In the second case, at time interval $n$, higher probabilities are assigned to actions which have been played a smaller number of times during all time intervals between $0$ and $n-1$. Let $T_{k,A_k^{(n_k)}}(n) \in \mathds{N}$, with $k \in \mathcal{K}$ and $n_k \in \lbrace 1, \ldots, N_k\rbrace$, be the number of times that player $k$ has played action $A_k^{(n_k)}$ up to time interval $n$, i.e.,
\begin{equation}\label{EqT}
T_{k,A_k^{(n_k)}}(n) = \ds\sum_{s = 0}^{n-1}\ind{a_k(s) = A_k^{(n_k)}}.
\end{equation}
Then, the probability distribution to select the next action is the following:
\begin{equation}\label{EqRossDistribution}
    \hat{\pi}_{k,A_k^{(n_k)}}(n) = \frac{\frac{1}{T_{k,A_k^{(n_k)}}(n)}}{{\ds\sum_{m = 1}^{N_k} \frac{1}{T_{k,A_k^{(m)}}(n)}}},
\end{equation}
where $T_{k,A_k^{(n_k)}}(0) = \delta$, with $\delta > 0$.

\subsection{Clipping Strategies and SE}
The behavioral rule \eqref{EqActionAtIntervalN} converges to a SE in pure strategies in finite time. However, in real system scenarios, it is often observed that there might exists an action from a given player, which achieves satisfaction regardless of the actions adopted by all the other players. We refer to this kind of actions as \emph{clipping actions} \cite{Perlaza-Globecom-2010}.
\begin{definition}[Clipping Action]\label{DefClippingAction} \emph{In the game $\GameSF$, a player $k \in \mathcal{K}$ is said to have a clipping action $a_k\in \mathcal{A}_k$ if
\begin{equation}
\forall \bs{a}_{-k} \in \mathcal{A}_{-k}, \quad a_k \in f_{k}\left(\bs{a}_{-k}\right).
\end{equation}}
\end{definition}
As shown in the following proposition, the existence of clipping actions in the game $\GameSF$ might inhibit the convergence of the behavioral rule in \eqref{EqActionAtIntervalN}.
\begin{proposition} \label{PropNonConvergenceSESA} \emph{
Consider the game $\GameSF$  in satisfaction form.
Assume the existence of at least one clipping action and denote it by $a_{k}^* \in \mathcal{A}_k$ for player $k$, with $k \in \mathcal{K}$. Then, if there exists a player $j \in \mathcal{K}\setminus\lbrace k \rbrace$, for which $f_j\left(a_k^*, \bs{a}_{-\lbrace j,k\rbrace}\right) = \emptyset $, $\forall \bs{a}_{-\lbrace j,k\rbrace} \in \ds\prod_{i \in \mathcal{K}\setminus\lbrace {j,k} \rbrace} \mathcal{A}_i$. Then, the behavioral rule in \eqref{EqActionAtIntervalN} does not converge to a SE with strictly positive probability.}
\end{proposition}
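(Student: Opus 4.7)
The plan is to exhibit an event of strictly positive probability on which the dynamics in \eqref{EqActionAtIntervalN} cannot possibly hit a satisfaction equilibrium. The candidate event is simply $E = \{a_k(0) = a_k^*\}$, where $a_k^*$ is the clipping action of player $k$. Since by assumption (see Alg.~\ref{Algo1} and the hypothesis in Prop.~\ref{PropConvergenceAlgo1}) the initial probability distribution $\hat{\bs{\pi}}_k(0)$ of player $k$ puts strictly positive mass on every action of $\mathcal{A}_k$, we immediately have $\Pr(E) > 0$. The remainder of the argument will consist in showing that on $E$ the process is trapped in a state where at least one player is persistently unsatisfied, ruling out convergence to any SE.

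The key step is a simple induction on $n$ establishing that, on $E$, $a_k(n) = a_k^*$ for every $n \in \mathds{N}$. The base case $n=0$ holds by definition of $E$. For the inductive step, if $a_k(n-1) = a_k^*$, then by Def.~\ref{DefClippingAction} we have $a_k^* \in f_k(\bs{a}_{-k}(n-1))$ for every choice of $\bs{a}_{-k}(n-1) \in \mathcal{A}_{-k}$, hence $\tilde{v}_k(n-1) = 1$ in \eqref{EqTildeV}. The first branch of the update rule \eqref{EqActionAtIntervalN} then forces $a_k(n) = a_k(n-1) = a_k^*$, closing the induction.

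Given that player $k$'s action is frozen at $a_k^*$ for all $n$, we turn to player $j$. By the standing hypothesis, $f_j(a_k^*, \bs{a}_{-\{j,k\}}) = \emptyset$ for every $\bs{a}_{-\{j,k\}} \in \prod_{i \in \mathcal{K}\setminus\{j,k\}} \mathcal{A}_i$. Thus, for every $n$ and for every realization of the actions of the players in $\mathcal{K}\setminus\{j,k\}$, no action $a_j(n) \in \mathcal{A}_j$ can belong to $f_j\bigl(a_k^*, \bs{a}_{-\{j,k\}}(n)\bigr)$. Consequently, $\tilde{v}_j(n) = 0$ for all $n$, so player $j$ is unsatisfied at every stage of the process on the event $E$.

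Finally, at an SE all players must be simultaneously satisfied (Def.~\ref{DefSE}), which in particular requires $\tilde{v}_j(n) = 1$. Since this never occurs on $E$, the realized action profile $\bs{a}(n)$ fails to be an SE for every $n \in \mathds{N}$, and the process never settles. We therefore obtain $\Pr(\text{no convergence to SE}) \geq \Pr(E) > 0$, as claimed. The only delicate point is the first step—verifying that the initial distribution indeed assigns positive probability to $a_k^*$—but this is guaranteed by the algorithmic specification (typically a uniform $\hat{\bs{\pi}}_k(0)$), and no further probabilistic machinery is required beyond the deterministic trapping argument outlined above.
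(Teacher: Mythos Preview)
Your argument is correct and follows the same idea as the paper's own proof: with positive probability player $k$ selects the clipping action $a_k^*$, after which player $k$ is permanently satisfied and never updates, while player $j$ is permanently unsatisfied, so no SE can be reached. Your version is in fact more careful than the paper's sketch, since you make explicit the induction showing $a_k(n)=a_k^*$ for all $n$ and you anchor the positive-probability event at time $n=0$ (where the initial distribution is guaranteed to have full support by Alg.~\ref{Algo1}), whereas the paper loosely refers to ``time $n>0$ before convergence'' without addressing whether player $k$ is actually resampling at that stage.
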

The proof of Prop. \ref{PropNonConvergenceSESA} follows from the fact that at time $n >0$ before convergence, the probability that player $k$ plays the clipping action $a_k^*$ is strictly positive \eqref{EqNonZeroProb}. If player $k$ plays $a_k^*$, by definition, there exist a player $j \neq k$ which would never be satisfied. Then, the behavioral rule does not converge to any SE. Nonetheless, simple alternatives can be used to solve this problem. For instance, the behavioral rule in \eqref{EqActionAtIntervalN} can be modified such that a player changes its current action (using a given probability distribution over the actions) even if it is satisfied when it sees the other players not satisfied during a long period. Nonetheless, in this case, players would need to have more than 1-bit feedback in order to detect the non-satisfaction of the others. For instance, the feedback of the instantaneous value of the metric performance. This approach can be compared with the idea of epsilon experimentation discussed in \cite{Young-2004}.
\section{Applications}\label{SecApplications}

In this section, we apply the concept of SE to the case of a classical interference channel \cite{Carleial1978} with $2$ pairs of transmitter-receiver pairs sharing a common bandwidth. Here, the notion of SE are compared with the existing equilibrium notions such as GNE. At the same time, the performance of the behavioral rules presented in Sec. \ref{SecLearningSE} is evaluated in terms of convergence time to a satisfaction equilibrium.

\subsection{Satisfaction Equilibrium in the Interference Channel}

Consider a set $\mathcal{K} = \lbrace 1, 2 \rbrace$  of two transmitter-receiver pairs simultaneously operating over the same frequency band and thus, subject to mutual interference. Each transmitter communicates only with its corresponding receiver and any kind of message exchange aiming to achieve transmit cooperation is not considered. For all $(j,k) \in \mathcal{K}^2$,  denote by $g_{j,k}$ and $p_{k}^{(n_k)}$ the channel gain between transmitter $k$ and receiver $j$, and the $n_k$-th transmit power level of transmitter $k$, respectively.
We denote by $\mathcal{A}_k = \left\lbrace  p_k^{(1)}, \ldots, p_k^{(N_k)} \right\rbrace$, the set of all possible transmit power levels of player $k$. For all $k \in \mathcal{K}$, the minimum transmit power is $p_k^{(1)} = 0$ and the maximum transmit power is $p_k^{(N_k)} = p_{k,\max}$.
The QoS metric, denoted by $u_k: \mathcal{A}_1 \times \mathcal{A}_2 \rightarrow \mathds{R}_+$, of the transmitter-reciever pair $k$ is its (Shannon) transmission rate in bits per second (bps). Thus, for all $\left( p_k, p_{-k}\right) \in \mathcal{A}_{k} \times \mathcal{A}_{-k} $, we write that,
\begin{equation}
	u_{k}\left( p_k, p_{-k} \right)= \log_{2}\left(1 + \frac{p_{k} g_{k,k}}{\sigma^2_k +   p_{-k} g_{k,-k}} \right)\mbox{ [bps/Hz]}.
\end{equation}
 Here, $\sigma_k^2$ is the noise level at receiver $k$ and we denote the signal to noise ratio at the transmitter $k$ by $\mathrm{SNR}_k = \frac{p_{k,\max}}{\sigma_k^2}$.
The QoS requirement for player $k$ is to provide a transmission rate higher than $\Gamma_k$ bps. Thus, we model the satisfaction correspondence $f_k$, as follows,
\begin{equation}\label{EqFkExample}
f_k\left( \bs{p}_{-k} \right) =  \left\lbrace p_k \in \mathcal{A}_k: u_{k}\left( p_k, p_{-k} \right) \geqslant \Gamma_k \right\rbrace.
\end{equation}
We assume also that transmitters associate different effort measures with each of their power levels. The higher the transmit power, the higher the effort.

\noindent
This scenario is modeled by a game
in satisfaction form $\GameSF$ 
and a game in normal form with constrained action sets $\GameSEprime$, where, for all $k \in \mathcal{K}$, the cost or effort function $c_k$ is defined as follows
\begin{equation}
c_k \left( p_k \right) = \left\lbrace \begin{array}{lcl}
p_{k,\max} + \delta & \text{ if } &  p_k = p_{k}^{(1)}\\
p_k & \text{ if } & p_k \in \lbrace p_{k}^{(2)}, \ldots, p_{k}^{(N_k)} \rbrace,
\end{array}
\right.
\end{equation}
where $\delta > 0 $.
Note that the most costly action is not to transmit. This choice is made to force the radio devices to transmit any time it is possible.
The correspondence $f_k'$ is defined as follows:
\begin{equation}\label{EqFkprime}
f_k' \left( \bs{p}_{-k} \right) = \left\lbrace
\begin{array}{rcl}
  f_k \left( \bs{p}_{-k} \right) & \text{if} & f_k' \left( \bs{p}_{-k} \right) \neq \emptyset \\
   p_k^{(1)} & \text{otherwise} &.
  \end{array}
\right.
\end{equation}

\noindent
Here, we include the non-transmission action $p_k^{(1)} =0$ in order to avoid an empty set of actions for players $k$, when it does not exists an action able to achieve the required minimum rate.

\noindent
Note that if the following holds, $\forall k \in \mathcal{K},$ $\exists p_k \in \mathcal{A}_k$, such that
\begin{equation}
\log_2\left( 1 + \frac{p_{k,} g_{k,k}}{\sigma^2_k + p_{-k,\max} g_{k,-k}} \right)> \Gamma_k,
\end{equation} 
the existence of at least one SE in ensured from Theorem \ref{TheoExistence2}. Note that under such an assumption $f_k(\bs{a}_{-k})$ is never empty (condition ($ii$)) and by establishing the comparison \emph{bigger or equal than} ($\geqslant$) as the binary relation $\preceq$ in $\mathcal{A}$, the condition ($i$) and ($iii$) are always verified. 

\begin{figure}
\centering
\includegraphics[width=\linewidth]{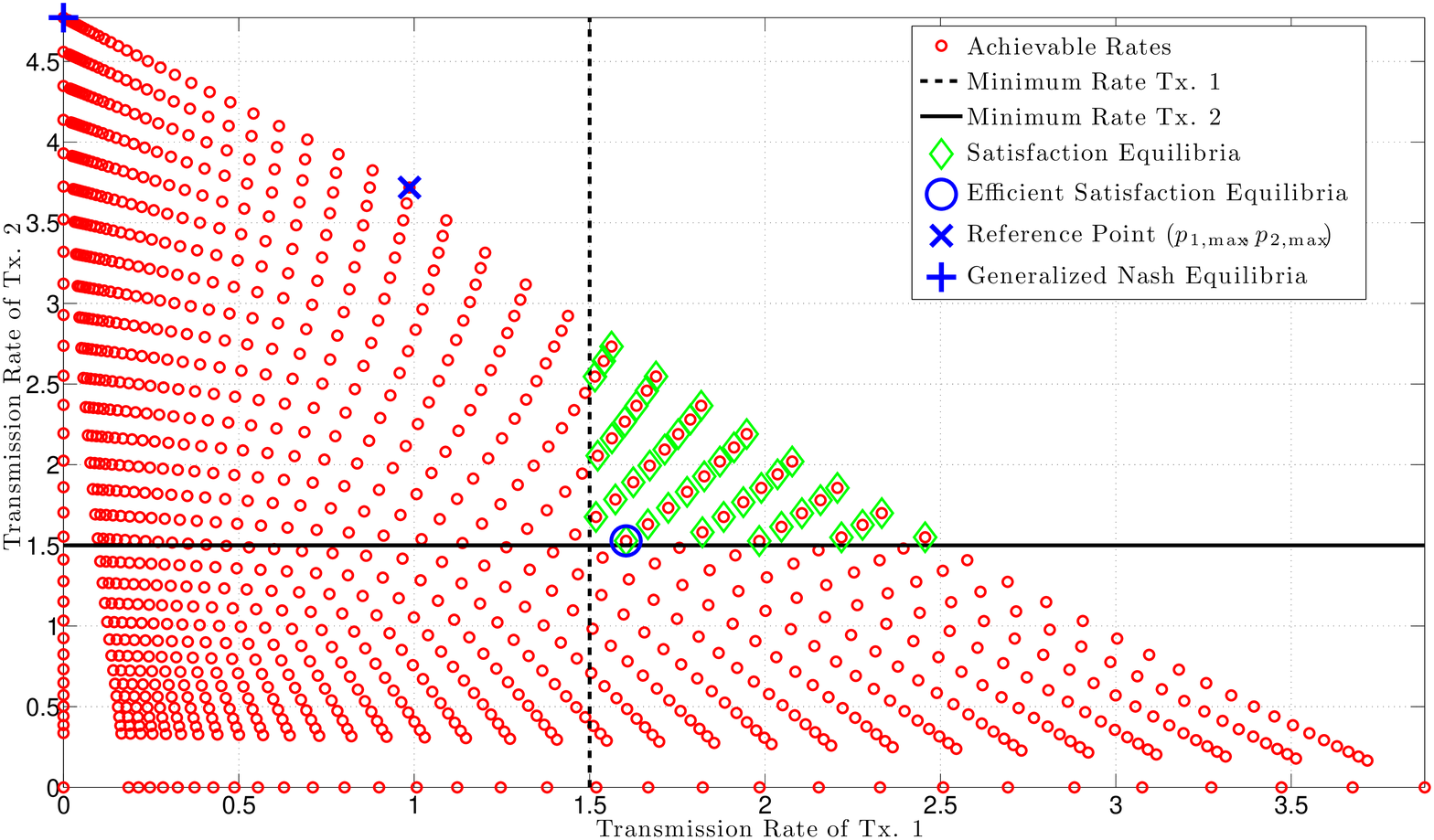}
\caption{Achievable (Shannon) transmission rates $\left( u_1(p_1,p_2), u_2(p_1,p_2) \right)$, for all $\left( p_1, p_2 \right) \in \mathcal{A}_1 \times \mathcal{A}_2$, with  $\mathrm{SNR} = \frac{p_{k,\max}}{\sigma^2_k} = 10$ dBs, $\left( \Gamma_1, \Gamma_2 \right) = \left(1.5 , 1.5 \right)$ bps and $N_1 = N_2 = 32$ power levels. }
\label{FigUtilities}
\end{figure}

\noindent
In Fig. \ref{FigUtilities}, we plot (in red circles) all the achievable (Shannon) transmission rates for both transmitters, i.e., the pairs $\left( u_1(p_1,p_2), u_2(p_1,p_2) \right)$, for all $\left( p_1, p_2 \right) \in \mathcal{A}_1 \times \mathcal{A}_2$ and a particular channel realization.  All the equilibria of the games 
$\gameSF$ and $\gameSEprime$  are plotted.
The game $\gameSEprime$ has a unique equilibrium which is the pair $\left( 0, p_{2,\max} \right)$ (see (Def. \ref{DefGNE}) and reference point $\left( p_{1,\max} , p_{2,\max} \right)$ in Fig. \ref{FigUtilities}).
The game $\gameSF$ has multiple equilibria (Def. \ref{DefSE}). In particular, note that 
with the game in normal form with constrained strategies, it is not possible to simultaneously satisfy the QoS of both transmitters. In this case, only transmitter $2$ can be satisfied.  On the contrary, in the game formulated in satisfaction form, all players are able to satisfy their QoS demands at the equilibrium of the game.  Importantly, the ESE satisfies the QoS condition for both transmitters with the lowest transmit power, while all the other SE requires a higher transmission power. In particular, note that (with this particular channel realization) the set of GNE and ESE appear to be unitary. However, as shown before, the existence and uniqueness of the ESE and GNE are conditioned in general. With this simple example, we have shown that by including the notion of performance maximization, i.e., the notion of GNE, leads to an unsatisfying game outcome, where only one player is satisfied, while the simultaneous satisfaction of both players is feasible.
\subsection{Clipping Actions in the Interference Channel}
Note that the game $\gameSF$ with the particular channel realization used in Fig. \ref{FigUtilities} possess at least one clipping action. For instance, when transmitter $2$ transmits at the maximum power $p_{2,\max}$, it is always satisfied even if player $1$ transmits at the maximum power (see the reference point $\left( p_{1,\max} , p_{2,\max} \right)$ in Fig. \ref{FigUtilities} ).  At the same time, if player $2$ transmits at the maximum power, player $1$ is unable to achieve satisfaction. Hence, if before observing convergence, transmitter $2$ uses, for instance, its maximum transmit power, then convergence to a SE is not observed neither in finite nor infinite time.

\begin{figure}
\centering
\includegraphics[width=\linewidth]{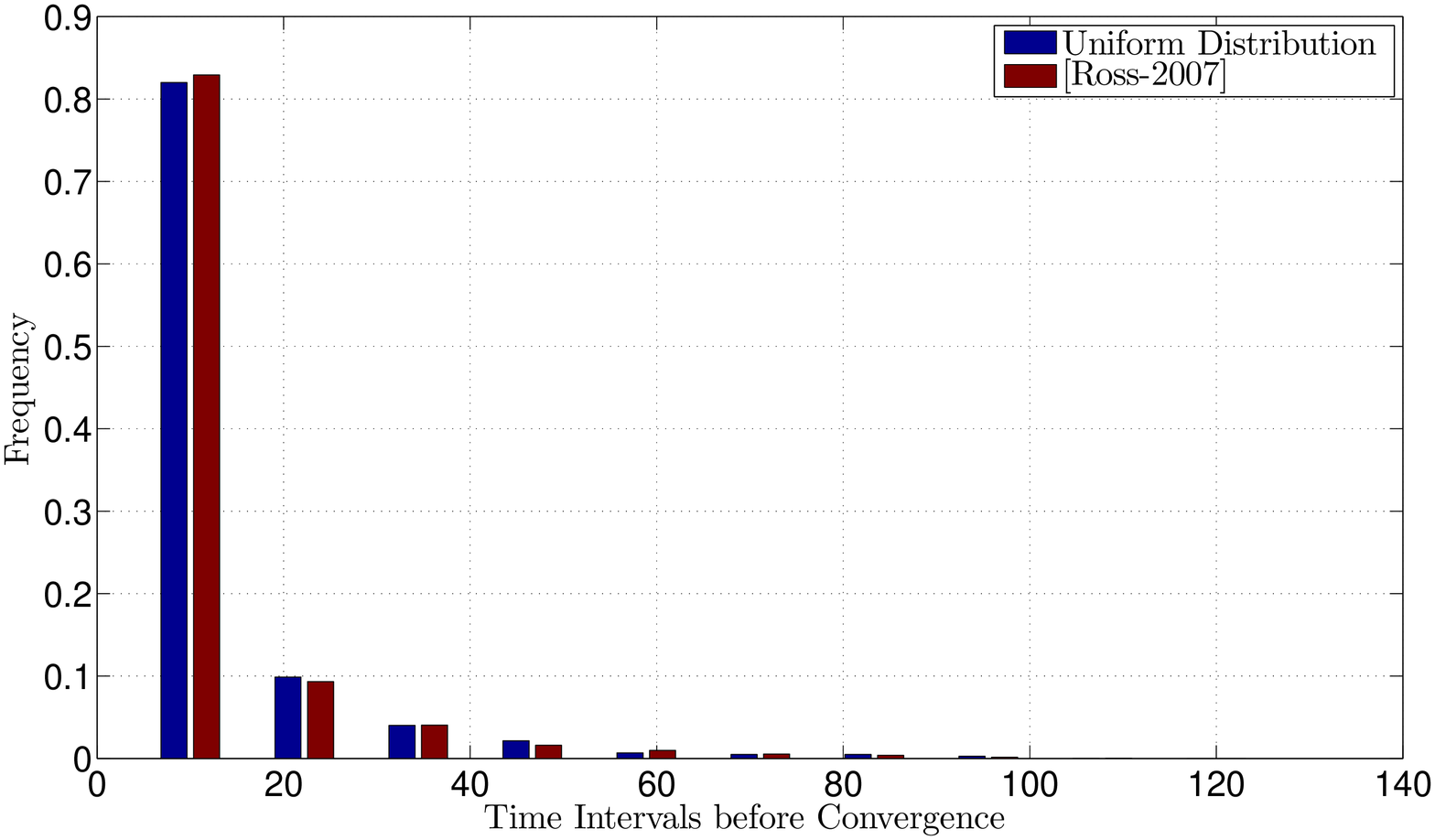}
\caption{Histogram of the convergence time to a SE in the game $\GameSF$ using the algorithm (Alg. \ref{Algo1}).  Here,  $\mathrm{SNR} = \frac{p_{k,\max}}{\sigma^2_k} = 10$ dBs, $\left( \Gamma_1, \Gamma_2 \right) = \left(1.5,1.5 \right)$ bps and $N_1 = N_2 = 32$ power levels.
}
\label{FigHistogramTimeConvergence}
\end{figure}
\begin{figure}
\centering
\includegraphics[width=\linewidth]{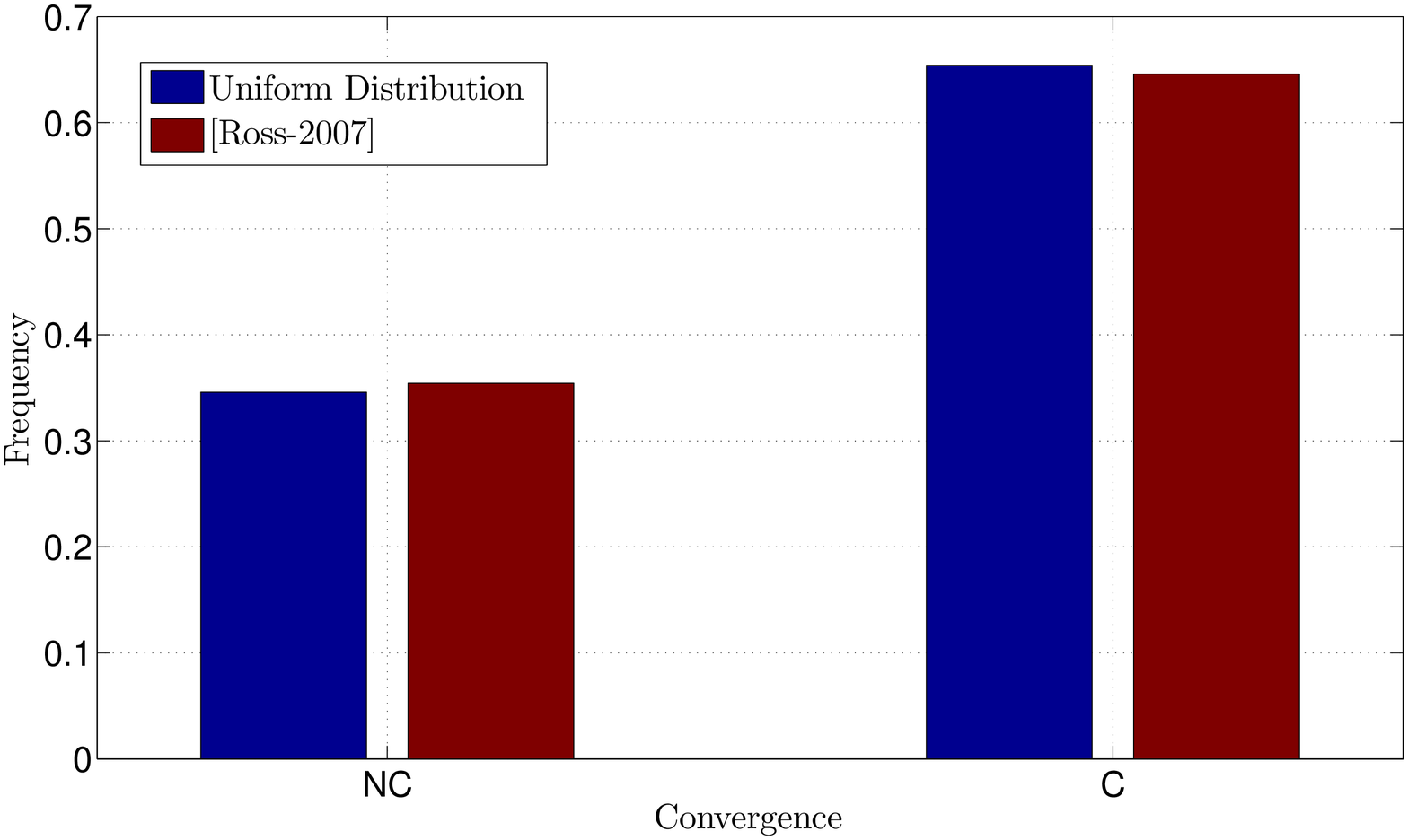}
\caption{Histogram of the event of convergence or non-convergence  of the learning algorithm (Alg. \ref{Algo1})  in the game $\GameSF$. Here,  $\mathrm{SNR} = \frac{p_{k,\max}}{\sigma^2_k} = 10$ dBs, $\left( \Gamma_1, \Gamma_2 \right) = \left(1.5,1.5 \right)$ bps and $N_1 = N_2 = 32$ power levels.}
\label{FigHistogramTimeNonConvergence}
\end{figure}

\noindent
In Fig. \ref{FigHistogramTimeNonConvergence}, we show a histogram of the convergence or not convergence of the algorithm.  Here, we say that the algorithm does not convergence if during $100$ consecutive time intervals, a given player does not change its current action, while the other still does (this implies that a clipping action might be being played).
At each trial of the algorithm, we use the same channel realization used in Fig. \ref{FigUtilities}.
Note that independently of the probability distributions $\hat{\bs{\pi}}_k(n)$ adopted by player $k$ to try new actions, the event of one player playing a clipping strategy is non-negligible ($0.3$). In the particular case of the interference channel as treated here, the corresponding game is free of clipping actions if the simultaneous transmission at maximum power allows satisfaction. However, in this case, the distinction between SE and NE looses its importance since both equilibrium concepts would be able to give a satisfactory solution to the QoS problem. This observation leaves open the way for further research on learning algorithms in the context of the SE in the presence of clipping actions.

\subsection{Convergence Time to the SE}
Now, our interest focuses on the average time for converging to one SE of the game $\gameSF$, when convergence is observed in the previous experiment.
The convergence time is measured as the number of action updates required to each transmitter before convergence.
\begin{figure}
\centering
\includegraphics[width=\linewidth]{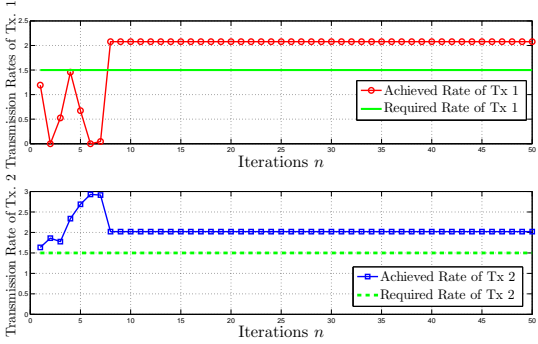}
\caption{Instantaneous achieved rates of transmitter $1$ (red) and $2$ (blue).Here,  $\mathrm{SNR} = \frac{p_{k,\max}}{\sigma^2_k} = 10$ dBs, $\left( \Gamma_1, \Gamma_2 \right) = \left(1.5,1.5 \right)$ bps and $N_1 = N_2 = 32$ power levels.}
\label{FigConvergence}
\end{figure}
In Fig. \ref{FigHistogramTimeConvergence}, we show a histogram of the convergence time when players try new actions with the probability distribution in \eqref{EqUniformDistrib} and \eqref{EqRossDistribution}. Note that in this particular scenario, using a probability different from the uniform distribution does not bring a significant improvement. Interestingly, the histogram shows that if convergence is observed, most of the time ($80\%$), satisfaction is achieved in less than $20$ time intervals (action updates).

\noindent
In Fig. \ref{FigConvergence}, we plot the achieved transmission rate of both links at each instant $n$ when the behavioral rule  \eqref{EqActionAtIntervalN} is used. Therein, it can be observed that even though a transmitter is satisfied, and thus does not change its transmission power level, its instantaneous transmission rate changes due to the action updates of the other transmitters. Once both transmitters are satisfied, then, none of them changes its transmit powers.

\section{Conclusions}\label{SecConclusions}
The game formulation in satisfaction form (SF) and the notion of satisfaction equilibrium (SE) introduced in this paper have been shown to be neatly adapted to model the problem of QoS provisioning in decentralized self-configuring networks. At the SE, all players are satisfied. On the contrary,
 when the QoS provisioning problem is modeled by games in classical normal form or normal form with constrained set of actions, equilibria where not all the players achieves satisfaction might be observed, even when there exist action profiles that allow the simultaneous satisfaction of all players.   The notion of SE has been formalized in the context of pure and mixed strategies and its existence and uniqueness has been studied. In particular, when no SE exists neither in pure nor mixed strategies, necessary and sufficient conditions for the existence of an epsilon-SE has been presented. However, not all games in SF possess an $\epsilon$-SE.  Finally,  a learning dynamics has been proposed to achieve SE.  In particular, we remark that it requires only $1$-bit feedback messages between the corresponding transmitter-receiver pairs. Nonetheless, the convergence remains still conditioned. This suggests that the design of algorithms such that at least one SE is learned in finite time and in a fully distributed fashion remains being an open problem.
\appendices
\section{ Potential Games with Constrained Set of Actions }\label{AppPGCSA}

In this appendix we present a generalization of a class of games known as exact potential games (PG)\cite{Monderer-Shapley-1996}.
We refer to this new class of games as constrained exact potential games. First, consider a game in normal form with constrained strategies and denote it by $\GameESFapp$. Let the set  $\mathcal{F}_k \subset \mathcal{A}$ be the graph of the correspondence $g_k$, hence,
\begin{equation}\label{EqGraphOffk}
\mathcal{F}_k = \left\lbrace  \left( a_k,\bs{a}_{-k} \right) \in \mathcal{A}: a_k \in g_k\left( \bs{a}_{-k} \right)\right\rbrace.
\end{equation}
The set $\mathcal{F}_k$ determines the action profiles which can be observed as outcomes of the game $\gameESFapp$, when only player $k$ is allowed to play given any action profile $\bs{a}_{-k}$ for which the set $g_k\left( \bs{a}_{-k}\right)$ is not empty. Following this reasoning, the set of  all possible outcomes of the game $\gameESFapp$ corresponds to the following set
\begin{equation}\label{EqIntersectionOfGraphs}
\mathcal{F} = \ds\bigcap_{j = 1}^{K} \mathcal{F}_j,
\end{equation}
which is the set of action profiles such that $\forall \bs{a} \in \mathcal{F}$, it holds that $\forall k\in\mathcal{K}$, $a_k \in g_k\left( \bs{a}_{-k} \right)$. Unilateral deviations of a set of players from any action profile $\bs{a} \in \mathcal{F}$ might lead to action profiles which do not belong to $\mathcal{F}$.
The following set
\begin{equation}\label{EqUnionOfGraphs}
\wideparen{\mathcal{F}} = \ds\bigcup_{j = 1}^{K} \mathcal{F}_j,
\end{equation}
contains all possible unilateral deviations one can observe from any action in the set $\mathcal{F}$. Using both sets $\mathcal{F}$ and $\wideparen{\mathcal{F}}$, we introduce the definition of exact constrained potential game.
\begin{definition}[Exact Constrained PG (ECPG)]\label{DefECPG} \emph{Any game in normal form with constrained set of actions $\GameESFapp$ is an exact constrained potential game (ECPG) if there exists a function $\phi: \wideparen{\mathcal{F}} \rightarrow \mathds{R}$ such that  for all $\bs{a} \in \wideparen{\mathcal{F}}$, it holds that, for all $k \in \mathcal{K}$ and for all $a_k' \in g_k\left( \bs{a}_{-k} \right)$,
\begin{equation}\nonumber
u_k(a_{k},\bs{a}_{-k}) - u_k(a'_{k},\bs{a}_{-k}) = \phi(a_{k},\bs{a}_{-k}) - \phi(a'_{k},\bs{a}_{-k}).
\end{equation}
}
\end{definition}
Before we continue, we clearly state that not all the the properties of potential games \cite{Monderer-Shapley-1996} hold for the constrained potential games. For instance, not all exact constrained PG have an equilibrium. In the following, we introduce two results regarding the existence of an equilibrium in pure strategies in ECPG.
\begin{theorem}[Existence of an equilibrium in ECPG]\label{TheoExistencePureStrategiesECPG}\emph{
The finite exact constrained potential game $\GameESFapp$, with potential function $\phi: \wideparen{\mathcal{F}} \rightarrow \mathds{R}_+$,
has at least one equilibrium in pure strategies, if the sets  $\mathcal{F}_1, \ldots, \mathcal{F}_K$, are non-empty and identical.
}
\end{theorem}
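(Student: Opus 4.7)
The plan is to exhibit a Nash equilibrium in pure strategies by showing that any maximizer of the potential over the (non-empty, finite) feasible set $\mathcal{F}$ must satisfy the GNE condition. The argument is the classical ``maximizer-of-the-potential'' trick of Monderer and Shapley, adapted so as to respect the constraint correspondences $g_k$.

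First I would invoke the hypothesis that $\mathcal{F}_1=\cdots=\mathcal{F}_K$ to establish that the set of feasible action profiles is well-behaved. By \eqref{EqIntersectionOfGraphs} and \eqref{EqUnionOfGraphs} this immediately gives
\begin{equation}
\mathcal{F}=\widehat{\mathcal{F}}=\mathcal{F}_1\neq\emptyset.
\end{equation}
In particular, the potential $\phi$ is defined on the whole of $\mathcal{F}$, which is finite and non-empty. Hence the maximization problem
\begin{equation}
\bs{a}^{*}\in\arg\max_{\bs{a}\in\mathcal{F}}\phi(\bs{a})
\end{equation}
admits a solution. The candidate equilibrium is any such $\bs{a}^{*}$.

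The key geometric step is to show that unilateral feasible deviations from $\bs{a}^{*}$ remain in $\mathcal{F}$. Take $k\in\mathcal{K}$ and $a_k'\in g_k(\bs{a}^{*}_{-k})$. By definition of $\mathcal{F}_k$, the profile $(a_k',\bs{a}^{*}_{-k})$ lies in $\mathcal{F}_k$. Since $\mathcal{F}_k=\mathcal{F}_j$ for every $j$, it lies in every $\mathcal{F}_j$, and therefore in $\mathcal{F}$. This is the one place where the identical-graphs hypothesis really does work for us, and I expect it to be the only delicate point of the argument: without it, a deviation by $k$ could easily leave some other player $j$ outside $g_j$, so the potential comparison below would not even be well-posed.

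Once the invariance of $\mathcal{F}$ under unilateral feasible deviations is in hand, the proof closes immediately using Def. \ref{DefECPG}. For every $k\in\mathcal{K}$ and every $a_k'\in g_k(\bs{a}^{*}_{-k})$, both $\bs{a}^{*}$ and $(a_k',\bs{a}^{*}_{-k})$ belong to $\mathcal{F}\subseteq\widehat{\mathcal{F}}$, so that
\begin{equation}
u_k(a_k^{*},\bs{a}^{*}_{-k})-u_k(a_k',\bs{a}^{*}_{-k})=\phi(a_k^{*},\bs{a}^{*}_{-k})-\phi(a_k',\bs{a}^{*}_{-k})\geqslant 0,
\end{equation}
the inequality being a consequence of the maximality of $\phi(\bs{a}^{*})$ over $\mathcal{F}$. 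Combined with $a_k^{*}\in g_k(\bs{a}^{*}_{-k})$ (which holds because $\bs{a}^{*}\in\mathcal{F}\subseteq\mathcal{F}_k$), this is exactly Def. \ref{DefGNE}, so $\bs{a}^{*}$ is a GNE of $\gameESFapp$, completing the proof. Finiteness of $\mathcal{A}$ ensures the maximizer exists; non-negativity of $\phi$ is not essential to the argument but is consistent with the statement.
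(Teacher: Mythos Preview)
Your argument is correct and follows essentially the same route as the paper: both proofs use the identical-graphs hypothesis to conclude $\mathcal{F}=\wideparen{\mathcal{F}}\neq\emptyset$, observe that unilateral feasible deviations therefore stay in $\mathcal{F}$, and then take a maximizer of $\phi$ over the finite set $\mathcal{F}$ as the equilibrium. Your write-up is in fact slightly more explicit than the paper's at the final step (you spell out the potential identity and the verification that $a_k^{*}\in g_k(\bs{a}^{*}_{-k})$); the only cosmetic mismatch is that the paper denotes the union of graphs by $\wideparen{\mathcal{F}}$ rather than $\widehat{\mathcal{F}}$.
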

\begin{IEEEproof}\emph{
By assumption, the set $\mathcal{F}$ \eqref{EqIntersectionOfGraphs} is non-empty. Thus, there exists at least one feasible outcome $\bs{a}^* \in \mathcal{F}$ for the game $\mathcal{G}$. Now, for all $k \in \mathcal{K}$, any unilateral deviation of player $k$ from an action profile $\bs{a}^*$ leads to an action profile of the form $\left( a_k, \bs{a}_{-k}^*\right) \in \wideparen{\mathcal{F}}$. Similarly, by assumption, both sets $\mathcal{F}$ and $\wideparen{\mathcal{F}}$ are identical, thus, any unilateral deviation from a feasible action profile is also a feasible action profile.
Now,  without any loss of generality, let the elements of the sets $\mathcal{F} = \left\lbrace \tilde{A}^{(1)}, \ldots, \tilde{A}^{(N)}\right\rbrace$ be indexed following any particular order such that the following holds,
\begin{equation}
\phi\left( \tilde{A}^{(1)} \right) \leqslant \phi\left( \tilde{A}^{(2)} \right) \leqslant  \ldots \leqslant  \phi\left(\tilde{A}^{(N)} \right),
\end{equation}
 with $N = \left| \mathcal{F} \right|$.
Thus, from  Def. \ref{DefGNE}, it holds that $\tilde{A}^{(N)}$ is an equilibrium of the game $\mathcal{G}$ , which completes the proof.
}
\end{IEEEproof}%

\section{Multiplicity of the ESE}\label{AppMultiplicityESE}

In this appendix,  we study the multiplicity of the equilibrium of the potential game with constrained strategies $\GameESF$. For doing so, we take advantage of the fact that it is a potential game with constrained action sets and thus, we analyze the auxiliary game $\GameESFpotential$. Note that this choice does not imply any lost of generality, since the set of GNE of both games are identical.

\noindent
In the following, we use some tools from graph theory to determine the number of ESE which a given game in efficient-satisfaction form $\GameESF$ can  possess.
We start by indexing the elements of the action set $\mathcal{A}$ in any given order using the index $n \in \mathcal{I} = \left\lbrace 1, \ldots, \left| \mathcal{A} \right| \right \rbrace$. Denote by $\bs{a}^{(n)} = \left(a^{(n)}_1, \ldots, a^{(n)}_K\right)$ the $n$-th element of the action set $\mathcal{A}$.
Consider that each action profile $\bs{a}^{(n)}$ is associated with a vertex $x_n$ in a given directed graph $G$.  There exists an arc from vertex $x_n$ to another vertex $x_m$, if the action profile represented by the latter $\bs{a}^{(m)}$ can be obtained from the former $\bs{a}^{(n)}$ by changing the action of only one player and lower potential (sum of efforts) is obtained. For instance, if the unique deviator is player $k$, then,  $a^{(m)}_k \in f_k\left( \bs{a}_{-k}^{(n)}\right)$  and $\phi\left( \bs{a}^{(n)} \right) > \phi\left( \bs{a}^{(m)} \right)$.
More precisely, the graph $G$ can be defined by the pair $G = \left( \mathcal{X}, \bs{B}\right)$, where the set $\mathcal{X} = \left\lbrace x_1, \ldots, x_{\left|\mathcal{A}\right|}\right\rbrace$ (nodes) contains the nodes representing the action profiles in the set $\mathcal{A}$ and $\bs{B}$ (edges) is
a non-symmetric matrix with dimensions $\left|\mathcal{A}\right| \times \left|\mathcal{A}\right|$ and entries defined as follows $\forall (n,m) \in \mathcal{I}^2$ and  $n\neq m$,

\begin{equation}\label{EqMatrixAHat}
    b_{n,m} = \left\lbrace \begin{array}{ccl} 1 & \text{ if }        & (i) \; \exists! \, k \in \mathcal{K}: \, a_k^{(n)} \neq a_k^{(m)}, \\
    & & (ii)a_k^{(m)} \in f_k\left( a_{-k}^{(n)} \right)\\
    & & (iii) \; \phi\left(\bs{a}^{(m)}\right) < \phi\left(\bs{a}^{(n)}\right)\\
                                                                                                    0 & \text{ otherwise }, &
                                                                     \end{array} \right.
\end{equation}
and $b_{i,i} = 0$ for all $i \in \mathcal{I}$.

\noindent
A realistic assumption is to consider that for any pair of action profiles $\bs{a}^{(n)}$ and $\bs{a}^{(m)}$ which are adjacent, we have that $\phi(\bs{a}^{(n)}) \neq  \phi(\bs{a}^{(m)})$. This is because players assign different effort values to their actions.
From the definition of the matrix $\bs{B}$, we have that a necessary and sufficient condition for a vertex $x_n$ to represent an ESE action profile is to have a null out-degree in the oriented graph $G$, i.e., there are no outgoing edges from the node $x_n$ (sink vertex).
Finally, one can conclude that determining the set of ESE in the game $\GameESF$ boils down to identifying all the sink vertices in the oriented graph $G$.
That is, exploiting the fact that, if the $n$-th row of the matrix $\bs{B}$ is a null-vector, then the action $\bs{a}^{(n)}$ is an ESE of the game $\tilde{\mathcal{G}}$.
Interestingly, a particular case arises when the resulting graph is an edgeless graph, i.e., the corresponding matrix $\bs{B}$ is a null matrix. In this case, the set of SE would be identical to the set of ESE, which implies that the idea of associating an effort to each action is not enough to select an ESE among the set of SE.
In any case, determining the exact set of SE would require the analysis of the matrix $\bs{B}$, which might be highly  demanding and
requires complete information.
\bibliographystyle{IEEEtran}
\bibliography{GT}
\end{document}